\documentclass[11pt]{article}
\usepackage{amsmath, amssymb, amsthm}

\usepackage{fullpage}

\usepackage[dvipsnames]{xcolor}
\usepackage{hyperref}
\hypersetup{
    colorlinks=true,
    linkcolor=blue,
    citecolor=ForestGreen,
    urlcolor=blue!60!black
}
\title{A Near-Optimal Lower Bound for Prefix-Matrix Factorizations}
\date{\vspace{-2em}}

\author{
Honghao Lin\footnote{Google Research, Carnegie Mellon University / Texas A\&M University. \texttt{honghaol3010@gmail.com}}
\and
Vahab Mirrokni\footnote{Google Research. \texttt{mirrokni@google.com}}
\and
David P. Woodruff\footnote{Google Research and Carnegie Mellon University. \texttt{dpwoodru@gmail.com }}
}
\newtheorem{theorem}{Theorem}[section]
\newtheorem{lemma}[theorem]{Lemma}

\newtheorem{corollary}[theorem]{Corollary}

\theoremstyle{definition}
\newtheorem{definition}[theorem]{Definition}
\newtheorem{remark}[theorem]{Remark}

\newtheorem*{restatedmaintheorem}{Theorem~\ref{thm:main_result}}

\begin{document}
\maketitle

\begin{abstract}
For the $n\times n$ lower-triangular all-ones matrix $Q$, we prove a near-optimal lower bound
\[
\gamma_{2,1}(Q)
:=
\inf_{Q=AB}
\|A\|_{2\to\infty}\|B\|_{1\to1}
=
\Omega\!\left(
\frac{\log^{3/2}n}{(\log\log n)^{3/2}}
\right),
\]
where the infimum ranges over real factorizations of arbitrary finite inner
dimension.\footnote{Concurrent works
of Bulanek et al.~\cite{BulanekEtAl2026} and Bhowmik and
Hasan~\cite{BhowmikHasan2026} establish the same qualitative lower bound,
with improved $\log\log n$ factors.  See
Sections~\ref{sec:our-results} and~\ref{sec:concurrent-comparison} for more
detailed chronological and technical comparisons.}  This cost is a central
parameter in space bounds for factorization-based rank and quantile estimation
in turnstile streams and in error bounds for matrix mechanisms for continual
counting under pure differential privacy.
The proof combines right-sided Haar projections with a scale-dependent
numerical-sparsity decomposition of the rows of $B$.  At each scale, a
rank--Frobenius argument shows that the numerically sparse rows cannot account
for all of the required Schatten $2/3$ mass, while a Haar projection estimate
bounds the contribution of the remaining rows.  Summing these bounds over the
dyadic scales yields the result.

The proof was obtained using a fully automated Gemini-based agentic system developed
internally at Google. The authors verified the proof and made minor revisions.
\end{abstract}

\section{Introduction}

In this work, we study the following matrix factorization problem.  Given a
matrix $M\in\mathbb{R}^{n\times n}$, define
\begin{equation}
\label{eq:intro_factorization_problem}
\gamma_{2,1}(M)
:=
\inf_{M=AB}
\|A\|_{2\to\infty}\|B\|_{1\to1}.
\end{equation}
Here $\|A\|_{2\to\infty}$ is the largest $\ell_2$ norm of a row of $A$,
$\|B\|_{1\to1}$ is the largest $\ell_1$ norm of a column of $B$, and the inner
dimension of the factorization is arbitrary but finite.  Equivalently, we seek vectors
$a_1,\ldots,a_n$ and $b_1,\ldots,b_n$ in a common real vector space such that
$M_{ij}=\langle a_i,b_j\rangle$, while minimizing
\[
\left(\max_i\|a_i\|_2\right)
\left(\max_j\|b_j\|_1\right).
\]
We focus on the lower triangular all-ones matrix
\[
Q_{ij}=\mathbf{1}_{\{j\le i\}},
\]

\paragraph{Motivation of this problem.}
The matrix $Q$ is the finite prefix-sum operator: for every
$x\in\mathbb{R}^n$, the $i$-th coordinate of $Qx$ is
$\sum_{j\le i}x_j$.  Prefix sums are the common linear workload underlying
rank and quantile queries in dynamic data streams and continual counting under
differential privacy.  More importantly, the factorization in
\eqref{eq:intro_factorization_problem} has the same operational meaning in
both settings: $B$ first transforms the data, and $A$ reconstructs the desired
workload from the transformed coordinates.  This makes
$\gamma_{2,1}(Q)$ a shared measure of the efficiency of two otherwise distinct
algorithmic frameworks.

\paragraph{Turnstile streaming.}
Let $x\in\mathbb{R}^n$ be a frequency vector maintained under insertions and
deletions, and suppose that a query asks for one coordinate of $Mx$.  Given a
factorization $M=AB$, the factorization-based approach maintains a point-query
sketch of the intermediate vector $Bx$ and reconstructs the requested answer
with the corresponding row of $A$.  An update to coordinate $j$ of $x$ changes
the intermediate vector in the direction of the $j$-th column of $B$, so
$\|B\|_{1\to1}$ controls the update's $\ell_1$ amplification.  The sketching
errors in the intermediate coordinates are combined according to a row of
$A$, and their Euclidean accumulation is controlled by
$\|A\|_{2\to\infty}$.  The resulting space guarantee contains the term
\[
\varepsilon^{-1}
\|A\|_{2\to\infty}\|B\|_{1\to1}.
\]
For $M=Q$, the coordinates of $Qx$ are precisely rank queries, and approximate
rank queries yield approximate quantiles.  Thus a smaller value of
$\gamma_{2,1}(Q)$ would directly improve this factorization-based approach to
turnstile rank and quantile estimation~\cite{BulanekEtAl2026}.  This viewpoint
abstracts the dyadic streaming paradigm used in earlier turnstile quantile
algorithms~\cite{CormodeMuthukrishnan2005,WangEtAl2013,LuoEtAl2016}.

\paragraph{Differential privacy.}
The same factorization has an analogous interpretation in a pure-DP Laplace
matrix mechanism~\cite{LiEtAl2015}.  The mechanism privately measures $Bx$ and releases
$A(Bx+Z)$, where the coordinates of $Z$ are independent Laplace random
variables.  Under unit $\ell_1$ adjacency, the sensitivity of the intermediate
measurement is exactly $\|B\|_{1\to1}$.  Calibrating the Laplace noise to this
sensitivity and reconstructing with $A$ gives maximum per-coordinate mean
squared error
\[
\frac{2}{\varepsilon^2}
\|A\|_{2\to\infty}^2\|B\|_{1\to1}^2.
\]
For $M=Q$, this is the continual-counting workload introduced in the continual
observation model~\cite{DworkEtAl2010,ChanShiSong2011}.  Thus, the same mixed
factorization norm is a key parameter in both settings: its square determines
the error of pure-DP Laplace matrix mechanisms, while the norm itself appears
as a central term in space bounds for factorization-based streaming
algorithms~\cite{BhowmikHasan2026}.  A lower bound on $\gamma_{2,1}(Q)$ therefore gives a
limitation shared by both applications, rather than a lower bound tied to only
one of their surrounding algorithmic models.

\paragraph{Dyadic upper bound.}
There is a simple upper bound from the dyadic decomposition~\cite{CormodeMuthukrishnan2005}.  Place the $n$
coordinates at the leaves of a complete binary tree and use dyadic intervals
as the intermediate dictionary.  Each update belongs to at most one interval
per level, so every column of $B$ has at most $O(\log n)$ nonzero entries.
Every prefix is a disjoint union of at most one dyadic interval per level, so
every row of $A$ has at most $O(\log n)$ nonzero entries.  For the resulting
binary factorization,
\[
\|A\|_{2\to\infty}\le \sqrt{1+\lceil\log_2 n\rceil},
\qquad
\|B\|_{1\to1}\le 1+\lceil\log_2 n\rceil,
\]
and therefore
\[
\gamma_{2,1}(Q)\le (1+\lceil\log_2 n\rceil)^{3/2}.
\]
Fredman gave a signed interval representation with sparsity
$\log_{3+2\sqrt{2}}n$, improving the constant in this construction but not its
$\log^{3/2}n$ asymptotic order~\cite{Fredman82}.  This leads to the central
question:
\begin{center}
\emph{Can the optimal cost over arbitrary real factorizations be
$o(\log^{3/2}n)$?}
\end{center}

Lower bounds for arbitrary real factorizations must accommodate cancellation,
arbitrarily large finite inner dimension, and rescaling of rank-one terms.  Consequently,
neither sparsity nor a bound on the dictionary size is available.  The argument
must instead detect the multiscale discontinuities of the prefix matrix while
coupling the $\ell_2$ geometry of the reconstruction rows with the $\ell_1$
geometry of the dictionary columns.

\subsection{Our Results}
\label{sec:our-results}

Our main result gives a near-matching lower bound for arbitrary real
factorizations of the prefix matrix.

\begin{theorem}[Main theorem]
\label{thm:main_result}
There is an absolute constant $c>0$ such that, for all sufficiently large $n$,
all $m\in\mathbb{N}$, and all real factorizations
$Q=AB$ with $A\in\mathbb{R}^{n\times m}$ and
$B\in\mathbb{R}^{m\times n}$,
\[
\|A\|_{2\to\infty}\|B\|_{1\to1}
\ge
c\frac{\log^{3/2}n}{(\log\log n)^{3/2}}.
\]
Consequently,
\[
\gamma_{2,1}(Q)
=
\Omega\!\left(
\frac{\log^{3/2}n}{(\log\log n)^{3/2}}
\right).
\]
\end{theorem}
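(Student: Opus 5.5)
We normalize so that $\|B\|_{1\to1}=1$; it then suffices to show that some row of $A$ has squared $\ell_2$ norm $\Omega(\log^3 n/(\log\log n)^3)$. Our plan is to certify this through a Schatten-quasinorm quantity.

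\textbf{Reduction to Schatten mass.} For any index set $S$ of rows,
\[
\|A_S\|_F^2\le |S|\,\|A\|_{2\to\infty}^2 .
\]
For any matrix $M=A_SB$ with $\|B\|_{1\to1}\le1$, Hölder's inequality for Schatten quasinorms gives
\[
\|M\|_{S_{2/3}}\le \|A_S\|_{S_2}\,\|B\|_{S_1}.
\]
The bound on $\|B\|_{S_1}$ is the difficulty: it is not controlled by the $1\to1$ norm alone. For this reason we work with a multiscale decomposition instead of a single Hölder step.

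\textbf{Multiscale decomposition.} Let $H$ be the Haar basis. For each dyadic scale $\ell=1,\dots,\log n$, let $P_\ell$ be the projection of $\mathbb{R}^n$ onto the Haar wavelets at scale $2^\ell$, applied on the right. The matrix $QP_\ell$ is essentially a block matrix of $n/2^\ell$ disjoint "tent" blocks. Each block has singular values behaving like those of a $2^\ell$-dimensional discrete derivative inverse restricted to mean-zero vectors. A direct computation then gives
\[
\|QP_\ell\|_{S_{2/3}}^{2/3}\asymp n\cdot(\text{poly-in-}\ell\text{-free constant}),
\]
that is, each scale contributes comparable Schatten-$2/3$ mass, proportional to $n$.

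Write $QP_\ell=A(BP_\ell)$. We split the rows of $B$ according to their numerical sparsity at scale $\ell$, i.e. the ratio $\|b_kP_\ell\|_1^2/\|b_kP_\ell\|_2^2$, using a threshold $\tau_\ell$:
\begin{itemize}
\item \emph{Numerically sparse rows.} These contribute a matrix whose rank-versus-Frobenius tradeoff is limited. Via $\|X\|_{S_{2/3}}^{2/3}\le \operatorname{rank}(X)^{2/3}\|X\|_F^{2/3}$, applied to low-rank pieces, they can absorb only a fraction of the scale-$\ell$ mass unless $\|A\|_{2\to\infty}$ is large.
\item \emph{Remaining (spread-out) rows.} A Haar projection estimate should show that the $\ell_1$ mass of such a row is forced to spread across many scales. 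Hence the total budget $\sum_\ell\|BP_\ell e_j\|_1\lesssim \|B\|_{1\to1}$ (up to $\log\log$ losses) is shared among the scales.
\end{itemize}

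\textbf{Summing over scales.} Adding the per-scale lower bounds over the $\log n/\log\log n$ scales, grouped into blocks of $\log\log n$ consecutive scales so that the thresholds are well separated, should give
\[
\sum_\ell \|A\|_{2\to\infty}^{2/3}(\text{budget}_\ell)^{2/3}\gtrsim \frac{\log n}{\log\log n}.
\]
Combining this with Hölder over $\ell$ and the constraint that the budgets sum to at most $1$, we obtain
\[
\|A\|_{2\to\infty}^{2/3}\gtrsim \frac{\log n}{\log\log n},
\]
i.e.\ the claimed bound $\log^{3/2}n/(\log\log n)^{3/2}$.

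\textbf{Main obstacle.} The hardest step is the Haar projection estimate for non-sparse rows: we must show that a single column of $B$ cannot place large $\ell_1$ mass at many scales simultaneously. Our first attempt would bound $\|P_\ell b\|_1$ through local averaging on dyadic intervals, and choose the block spacing of $\log\log n$ scales so that cross-scale leakage decays geometrically. This is the source of the $\log\log n$ loss.
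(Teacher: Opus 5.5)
Your outline names the right ingredients (single-scale Haar projections, a numerical-sparsity split, a rank--Frobenius estimate). However, the mechanism you use to aggregate over scales fails, and there are concrete errors along the way.

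First, the per-scale mass is not $\asymp n$. Since $QP_\ell=\sum_k (Qh_{\ell,k})h_{\ell,k}^T$ with the tent vectors $Qh_{\ell,k}$ disjointly supported, $QP_\ell$ has exactly $n/2^\ell$ nonzero singular values, all equal to $\sqrt{(4^\ell+2)/12}$. Hence $\|QP_\ell\|_{S_{2/3}}^{2/3}\asymp n2^{-\ell/3}$. The discrete-integration spectrum you describe belongs to the projection onto \emph{all} scales $\le\ell$, whose mass is $\Theta(n)$ for every $\ell$. The decay is harmless only because the supply side decays the same way, via $\|P_\ell b\|_2^2\le\|b\|_1^2/2^\ell$, so you must track it.

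Second, and decisively, the budget inequality $\sum_\ell\|BP_\ell e_j\|_1\lesssim\|B\|_{1\to1}$ is false. The vector $P_\ell e_j=\pm2^{-\ell/2}h_{\ell,k}$ has $\ell_1$ norm exactly $1$ at every scale. So for the valid factorization $Q=Q\cdot I$ the left side equals $\log_2 n$ while $\|B\|_{1\to1}=1$. No Haar estimate can force $\ell_1$ mass to be split among scales.

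Third, even granting that budget, your last step does not yield the exponent $3/2$. If $\sum_\ell x_\ell\le 1$ over $L$ terms, then $\sum_\ell x_\ell^{2/3}\le L^{1/3}$. So your displayed inequality with $L=\log n/\log\log n$ gives only $\|A\|_{2\to\infty}\gtrsim\log n/\log\log n$. That is weaker than the $\Omega(\log n)$ already implied by $\gamma_2(Q)=\Theta(\log n)$ together with $\|B\|_{1\to2}\le\|B\|_{1\to1}$. Concavity of $x^{2/3}$ rewards spreading a fixed budget, so splitting a budget across scales is the wrong accounting.

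The missing idea is to apply H\"older over the inner index $k$ rather than over scales. With $N_k=\|A_{:,k}\|_2^2$ and $L_k=\|b_k\|_1$, one has $\sum_kN_k^{1/3}L_k^{2/3}\le n(M_AM_B)^{2/3}$. The paper then shows two things. Every scale demands $\Omega(n)$ of this cost, with row $k$ charged $N_k^{1/3}L_k^{2/3}$ times a weight in $[0,1]$. Each row's weights sum to only $O(\log\log n)$ over all scales, so the cost is reused a bounded number of times rather than split.

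This rests on the scale-independent numerical sparsity $S_k=L_k^2/\|b_k\|_2^2$, compared against the threshold $\epsilon 2^\ell/\Delta$ with $\Delta=M_A^2M_B^2$.
\begin{itemize}
\item Rows below the threshold form a submatrix with $\|B_{\mathrm{sparse}}\|_2^2\le M_B^2\max S_k$, by the Schur test and $L_k\le M_BS_k$. Together with $\|A_{\mathrm{sparse}}\|_F^2\le nM_A^2$, they contribute Frobenius mass at most $\epsilon n2^\ell$. Having rank at most $n/2^\ell$, they contribute Schatten mass at most $\epsilon^{1/3}n2^{-\ell/3}$, which is too little.
\item Each remaining row contributes at most $2^{-\ell/3}N_k^{1/3}L_k^{2/3}\min\bigl(1,(2^\ell/S_k)^{1/3}\bigr)$, by $\|P_\ell b\|_2^2\le\|b\|_1^2/\max(2^\ell,S(b))$.
\end{itemize}
A row therefore carries weight $1$ only when $S_k\le 2^\ell<S_k\Delta/\epsilon$. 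After reducing to $\Delta\le\log^3 n$, that is only $O(\log(\Delta/\epsilon))=O(\log\log n)$ scales, with geometric decay at finer ones. This yields $\sum_kN_k^{1/3}L_k^{2/3}=\Omega(n\log n/\log\log n)$.

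Your scale-dependent sparsity $\|P_\ell b_k\|_1^2/\|P_\ell b_k\|_2^2$ would break this count. For $b_k=e_j$ it equals $2^\ell$ at every scale, so such rows would count as spread out, with full weight, at all $\log n$ scales.
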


The proof of Theorem~\ref{thm:main_result} was obtained using a fully
automated Gemini-based agentic system developed internally at Google.  After we
shared this result with Bulanek et al., they independently obtained the
stronger lower bound
$\Omega(\log^{3/2}n/\log\log n)$ in~\cite{BulanekEtAl2026}.  As recorded in
their AI acknowledgments, their lower-bound proof was generated entirely by an
internal version of the UCLA AI for Math Moonshot harness and was 
rewritten by the authors.  Their result improves our bound by a factor of
$\sqrt{\log\log n}$.  In concurrent work, Bhowmik and Hasan proved the matching
bound $\Theta(\log^{3/2}n)$~\cite{BhowmikHasan2026}, improving our bound by a
factor of $(\log\log n)^{3/2}$.

\subsection{Comparison with Concurrent Work}
\label{sec:concurrent-comparison}

\paragraph{Comparison with Bulanek--Kumar--Meka--Nelson--Sarl\'os.}
In~\cite{BulanekEtAl2026}, Bulanek et al. introduce the factorization norm in
the analysis of $M$-point queries and prove
for the prefix matrix the lower bound
\[
\gamma_{2,1}(Q)
=
\Omega\!\left(\frac{\log^{3/2}n}{\log\log n}\right).
\]
Their proof is organized along root-to-leaf paths of the dyadic interval tree.
After normalizing the column $\ell_1$ norms of the dictionary matrix, they test
the factorization against left-minus-right Haar functionals.  The resulting
constraints pair a vector of dictionary imbalances with a piecewise-affine
tent function.  A martingale argument controls the number of dictionary rows
having large accumulated path variance.  The kink of each tent then supplies a
fresh component orthogonal to all same-or-coarser dyadic directions, and
Bessel's inequality converts the corresponding fresh distances into a lower
bound on a row norm of the reconstruction factor.

Our proof organizes the multiscale information by levels rather than paths.
For a fixed scale $d$, it analyzes all Haar wavelets at that scale
simultaneously through the matrix $X_d=QP_d$.  We partition the dictionary rows
according to their numerical sparsity and use Gram and rank-bounded Schatten
estimates to show that numerically sparse rows cannot supply the required Schatten
$2/3$ mass.  This plays the role that the martingale and fresh-direction
arguments play in~\cite{BulanekEtAl2026}.  Quantitatively, their lower bound is
stronger than ours by a factor of $\sqrt{\log\log n}$.

\paragraph{Comparison with Bhowmik--Hasan.}
In~\cite{BhowmikHasan2026}, Bhowmik and Hasan study the same lower triangular
matrix under both the maximum-row cost
$\|A\|_{2\to\infty}\|B\|_{1\to1}$ and the normalized Frobenius cost.  Their
argument gives the matching order $\Theta(\log^{3/2}n)$ for these costs.  The
central invariant is the finite $p$-nuclear power
\[
\mathsf n_p(M)
=
\inf_{M=\sum_k u_kv_k^T}
\sum_k\bigl(\|u_k\|_2\|v_k\|_1\bigr)^p.
\]
They lower-bound this quantity through aggregate column widths.  For the prefix
matrix,
\[
D_k(Q)=\Theta\!\left(\frac{n^{3/2}}{\sqrt{k}}\right),
\qquad 1\le k\le n/16,
\]
where $D_k(Q)$ is the minimum, over subspaces of dimension at most $k$, of the
sum of the Euclidean distances from the columns of $Q$ to the subspace.  A
rank-one tail estimate and a reverse Hardy accumulation transfer the width
profile to $\mathsf n_p(Q)$.  At the critical exponent $p=2/3$, the resulting
summand indexed by $k$ is of order $n/k$, producing
$\mathsf n_{2/3}(Q)=\Theta(n\log n)$.  H\"older's inequality then transfers
this bound to the two factorization costs, and a Fenwick interval factorization
supplies the matching upper bounds.

There is a direct algebraic connection between this invariant and the initial
reduction in our proof.  If $Q=AB$, with $a_k$ the $k$-th column of
$A$ and $b_k^T$ the $k$-th row of $B$, then
\[
\sum_k
\bigl(\|a_k\|_2\|b_k\|_1\bigr)^{2/3}
=
\sum_k N_k^{1/3}L_k^{2/3}.
\]
Thus the fractional rank-one cost used below is precisely the $2/3$-power
cost of the rank-one representation induced by the factorization.  The
aggregate-width method gives a lower bound of order $n\log n$ for this sum,
whereas our scale-wise numerical-sparsity method gives
$n\log n/\log\log n$.  H\"older's inequality converts a lower bound $S$ for
the sum into the factorization lower bound $(S/n)^{3/2}$.  Consequently, the
factor $\log\log n$ lost in our sum becomes a factor
$(\log\log n)^{3/2}$ between the final lower bounds.

\section{Proof of the Main Theorem}
\label{sec:proof}

We give the proof of Theorem~\ref{thm:main_result}.  The argument combines
scale-wise Haar projections, a numerical-sparsity dichotomy for the rows of
the dictionary factor, and an aggregation over dyadic scales.

\subsection{Fractional H\"older Reduction}
\label{sec:fractional-reduction}
The primary obstruction to proving a sharp lower bound on the target factorization norm is that the $\ell_2$ penalty only grows as $O(\log n)$, while the $\ell_1$ sparsity constraint interacts globally with the structure of $B$. In this subsection, we provide a rigorous algebraic reduction that decouples the factorization energy from the global max-norms, isolating the required spatial-spectral trade-off into a single additive cost function over the intermediate vectors of the factorization.

Let $Q = AB$ be any exact factorization of the $n \times n$ lower triangular all-ones matrix, where $A \in \mathbb{R}^{n \times m}$ and $B \in \mathbb{R}^{m \times n}$. We are interested in the mixed factorization norm $M_A M_B$, where 
\begin{align*}
M_A &= \|A\|_{2\to\infty} = \max_{1 \le i \le n} \|A_{i,:}\|_2, \\
M_B &= \|B\|_{1\to 1} = \max_{1 \le j \le n} \|B_{:,j}\|_1.
\end{align*}

\begin{definition}[Column Energy and Row Capacity]
\label{def:volume_metrics}
For each $k \in \{1, \dots, m\}$, let $N_k$ denote the total squared $\ell_2$ energy of the $k$-th column of $A$, and let $L_k$ denote the total $\ell_1$ capacity of the $k$-th row of $B$:
\begin{align*}
N_k &= \|A_{:,k}\|_2^2 = \sum_{i=1}^n A_{i,k}^2, \\
L_k &= \|B_{k,:}\|_1 = \sum_{j=1}^n |B_{k,j}|.
\end{align*}
\end{definition}

\begin{lemma}[Global Factor-Sum Bounds]
\label{lem:volume_constraints}
For any factorization $Q=AB$, the sequences $\{N_k\}_{k=1}^m$ and $\{L_k\}_{k=1}^m$ satisfy the following global factor-sum bounds:
\begin{align}
\sum_{k=1}^m N_k &\le n M_A^2, \label{eq:volume_N} \\
\sum_{k=1}^m L_k &\le n M_B. \label{eq:volume_L}
\end{align}
\end{lemma}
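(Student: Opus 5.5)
The plan is a double-counting argument: each sum is rewritten as the total mass of the corresponding matrix, and the row-wise or column-wise maxima are then summed over the $n$ rows or columns. Neither bound uses the identity $Q=AB$. Both hold for arbitrary matrices $A\in\mathbb{R}^{n\times m}$ and $B\in\mathbb{R}^{m\times n}$ once $M_A$ and $M_B$ are defined as above, so the factorization plays no role at this step.

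For \eqref{eq:volume_N}, exchange the order of summation in the squared Frobenius norm of $A$:
\[
\sum_{k=1}^m N_k=\sum_{k=1}^m\sum_{i=1}^n A_{i,k}^2=\sum_{i=1}^n\|A_{i,:}\|_2^2 .
\]
Each of the $n$ summands on the right is at most $M_A^2$ by the definition of $\|A\|_{2\to\infty}$ as the largest row $\ell_2$ norm. This gives $\sum_k N_k\le nM_A^2$.

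For \eqref{eq:volume_L}, do the same with the entrywise $\ell_1$ norm of $B$:
\[
\sum_{k=1}^m L_k=\sum_{k=1}^m\sum_{j=1}^n |B_{k,j}|=\sum_{j=1}^n\|B_{:,j}\|_1 .
\]
There are $n$ columns, and each contributes at most $M_B$ by the definition of $\|B\|_{1\to1}$. This gives $\sum_k L_k\le nM_B$.

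I do not expect a genuine obstacle here. All sums are finite, so interchanging the order of summation is justified. The only point to check is the bookkeeping: $N_k$ is indexed by columns of $A$ but is bounded through rows of $A$, while $L_k$ is indexed by rows of $B$ but is bounded through columns of $B$. This transposition is exactly what lets the max-norms $M_A$ and $M_B$ enter.
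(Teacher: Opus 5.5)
Your proposal is correct and is essentially the paper's proof: interchange the order of summation to write $\sum_k N_k$ as the sum of the squared row norms of $A$ and $\sum_k L_k$ as the sum of the column $\ell_1$ norms of $B$, then bound each of the $n$ terms by $M_A^2$ and $M_B$ respectively. You are also right that the identity $Q=AB$ plays no role here.
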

\begin{proof}
For the squared column norms $N_k$, we exchange the order of summation to express the total sum in terms of the row norms of $A$:
\begin{equation*}
\sum_{k=1}^m N_k = \sum_{k=1}^m \sum_{i=1}^n A_{i,k}^2 = \sum_{i=1}^n \sum_{k=1}^m A_{i,k}^2 = \sum_{i=1}^n \|A_{i,:}\|_2^2.
\end{equation*}
Since each of the $n$ rows of $A$ has a squared $\ell_2$ norm bounded by $M_A^2$, the sum over all $n$ rows is at most $n M_A^2$.

Similarly, for the row $\ell_1$ norms $L_k$, we exchange the order of summation to express the total sum in terms of the column $\ell_1$ norms of $B$:
\begin{equation*}
\sum_{k=1}^m L_k = \sum_{k=1}^m \sum_{j=1}^n |B_{k,j}| = \sum_{j=1}^n \sum_{k=1}^m |B_{k,j}| = \sum_{j=1}^n \|B_{:,j}\|_1.
\end{equation*}
Since each of the $n$ columns of $B$ has an $\ell_1$ norm bounded by $M_B$, the sum over all $n$ columns is at most $n M_B$.
\end{proof}

\begin{lemma}[Nontriviality of the Factorization Cost]
\label{lem:nontrivial_cost}
For every exact factorization $Q=AB$ of the lower triangular all-ones matrix,
\[
M_A M_B \ge 1.
\]
Consequently, $\Delta=M_A^2M_B^2\ge 1$.
\end{lemma}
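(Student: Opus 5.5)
The plan is to evaluate a single entry of $Q$ and use the two norm constraints directly. Take the diagonal entry $Q_{11}=1$ (any entry equal to $1$ works). Exact factorization gives
\[
1 = Q_{11} = \sum_{k=1}^m A_{1,k}B_{k,1} = \langle A_{1,:}, B_{:,1}\rangle .
\]
Bounding this inner product by H\"older's inequality with the $(\infty,1)$ pair gives
\[
1 \le \|A_{1,:}\|_\infty \,\|B_{:,1}\|_1 .
\]
Since the $\ell_\infty$ norm of a vector is at most its $\ell_2$ norm, we get $\|A_{1,:}\|_\infty\le\|A_{1,:}\|_2\le M_A$. Also $\|B_{:,1}\|_1\le M_B$ by the definition of $\|B\|_{1\to1}$ as the maximal column $\ell_1$ norm. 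Combining these yields $M_AM_B\ge 1$.

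The consequence $\Delta=M_A^2M_B^2\ge1$ then follows by squaring, since both quantities are nonnegative.

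There is no real obstacle. The only points to check are that the chosen entry is nonzero and that each vector is matched to the right norm: the row of $A$ to $\ell_2$, and the column of $B$ to $\ell_1$. The entry is nonzero because $Q_{11}=1$ for the lower-triangular all-ones matrix. No earlier lemma (such as Lemma~\ref{lem:volume_constraints}) is needed.
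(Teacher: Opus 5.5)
Your proof is correct and is essentially the paper's argument: both evaluate a single unit entry of $Q$ and bound the resulting inner product by $M_AM_B$. The paper uses Cauchy--Schwarz and then $\|B_{:,j}\|_2\le\|B_{:,j}\|_1$, whereas you use the $(\infty,1)$ H\"older pairing and then $\|A_{1,:}\|_\infty\le\|A_{1,:}\|_2$; this difference is immaterial.
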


\begin{proof}
For any entry with $i\ge j$, we have $Q_{ij}=1$. Hence
\[
1 = |Q_{ij}| = |\langle A_{i,:},B_{:,j}\rangle|
\le \|A_{i,:}\|_2 \|B_{:,j}\|_2
\le \|A_{i,:}\|_2 \|B_{:,j}\|_1
\le M_A M_B.
\]
Squaring gives $\Delta=M_A^2M_B^2\ge 1$.
\end{proof}

\begin{theorem}[Fractional H\"older Reduction]
\label{thm:fractional_holder}
For every factorization $Q=AB$, the fractional rank-one cost of the induced representation satisfies
\begin{equation*}
\sum_{k=1}^m N_k^{1/3} L_k^{2/3} \le n(M_A M_B)^{2/3}.
\end{equation*}
Consequently, if the Fractional Rank-One Bound
\begin{equation}
\label{eq:fractional_cost_target}
\sum_{k=1}^m N_k^{1/3} L_k^{2/3} \ge \Omega\left(n \frac{\log n}{\log \log n}\right)
\end{equation}
holds, then the factorization norm is bounded by 
\begin{equation*}
M_A M_B \ge \Omega\left(\frac{\log^{1.5} n}{(\log \log n)^{1.5}}\right).
\end{equation*}
\end{theorem}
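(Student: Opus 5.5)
The plan is to apply H\"older's inequality to the sum $\sum_k N_k^{1/3}L_k^{2/3}$ with conjugate exponents $p=3$ and $q=3/2$, then combine with Lemma~\ref{lem:volume_constraints}. Writing each summand as $(N_k)^{1/3}\cdot(L_k)^{2/3}$, H\"older gives
\[
\sum_{k=1}^m N_k^{1/3}L_k^{2/3}
\le
\Bigl(\sum_{k=1}^m N_k\Bigr)^{1/3}\Bigl(\sum_{k=1}^m L_k\Bigr)^{2/3}.
\]
Since $N_k,L_k\ge 0$, no sign issues arise. Substituting \eqref{eq:volume_N} and \eqref{eq:volume_L} bounds the right side by $(nM_A^2)^{1/3}(nM_B)^{2/3}=n\,M_A^{2/3}M_B^{2/3}=n(M_AM_B)^{2/3}$. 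This is the first claim.

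For the consequence, assume \eqref{eq:fractional_cost_target}: there is an absolute constant $c_0>0$ such that for all large $n$ the sum is at least $c_0\, n\log n/\log\log n$. Chaining with the inequality just proved gives $c_0\, n\log n/\log\log n\le n(M_AM_B)^{2/3}$. Dividing by $n$ and raising both sides to the power $3/2$, which is monotone on nonnegative reals, yields
\[
M_AM_B\ge c_0^{3/2}\,\frac{\log^{3/2}n}{(\log\log n)^{3/2}}.
\]

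No step here is a real obstacle. The only points to check are that the exponents $1/3+2/3=1$ make H\"older applicable in the form used (exponents $3$ and $3/2$ applied to $N_k^{1/3}$ and $L_k^{2/3}$), and that the implicit constant in the conclusion is $c_0^{3/2}$. The difficulty of the main theorem is entirely contained in establishing \eqref{eq:fractional_cost_target}, which this reduction takes as a hypothesis. Lemma~\ref{lem:nontrivial_cost} is not needed here.
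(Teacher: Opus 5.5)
Your proposal is correct and follows the paper's proof: H\"older with exponents $3$ and $3/2$ applied to $N_k^{1/3}$ and $L_k^{2/3}$, then the global factor-sum bounds of Lemma~\ref{lem:volume_constraints}, and for the consequence, dividing by $n$ and raising to the power $3/2$. Tracking the constant explicitly as $c_0^{3/2}$ is slightly more careful than the paper's $\Omega$-notation, but the paper already uses $c_0$ for $(1/12)^{1/3}$, so a different symbol would avoid a clash.
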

\begin{proof}
We apply H\"older's inequality to the sequence of products $N_k^{1/3} L_k^{2/3}$. Choosing conjugate exponents $p=3$ and $q=3/2$, which satisfy $1/p + 1/q = 1/3 + 2/3 = 1$, we obtain:
\begin{equation*}
\sum_{k=1}^m N_k^{1/3} L_k^{2/3} \le \left( \sum_{k=1}^m \left(N_k^{1/3}\right)^3 \right)^{1/3} \left( \sum_{k=1}^m \left(L_k^{2/3}\right)^{3/2} \right)^{2/3}.
\end{equation*}
Simplifying the exponents yields:
\begin{equation*}
\sum_{k=1}^m N_k^{1/3} L_k^{2/3} \le \left( \sum_{k=1}^m N_k \right)^{1/3} \left( \sum_{k=1}^m L_k \right)^{2/3}.
\end{equation*}
Substituting the global factor-sum bounds from Lemma~\ref{lem:volume_constraints} gives:
\begin{equation*}
\sum_{k=1}^m N_k^{1/3} L_k^{2/3} \le \left( n M_A^2 \right)^{1/3} \left( n M_B \right)^{2/3} = n^{1/3 + 2/3} \left( M_A^{2/3} \right) \left( M_B^{2/3} \right) = n (M_A M_B)^{2/3}.
\end{equation*}
Assuming the target fractional rank-one bound holds (Equation~\ref{eq:fractional_cost_target}), we substitute the lower bound:
\begin{equation*}
\Omega\left(n \frac{\log n}{\log \log n}\right) \le n (M_A M_B)^{2/3}.
\end{equation*}
Dividing both sides by $n$, we obtain:
\begin{equation*}
(M_A M_B)^{2/3} \ge \Omega\left(\frac{\log n}{\log \log n}\right).
\end{equation*}
Raising both sides to the power of $3/2$ yields:
\begin{equation*}
M_A M_B \ge \Omega\left(\frac{\log^{1.5} n}{(\log \log n)^{1.5}}\right).
\end{equation*}
This establishes the unconditional reduction.
\end{proof}

\begin{remark}
This reduction translates the coupled min-max optimization of $\|A\|_{2\to\infty} \|B\|_{1\to 1}$ into an uncoupled, additive penalty function over the intermediate dimensions. The remainder of this paper is dedicated to proving that the multiscale spectral synthesis of $Q$ demands an aggregate fractional capacity of $\Omega(n \log n / \log \log n)$, which completes the unconditional proof of the barrier.
\end{remark}

\subsection{Direct Right-Sided Haar Projection and Schatten 2/3 Mass}
\label{sec:haar-projection}
To isolate the contribution of $Q$ at scale $d$, we project on the right onto
the Haar subspace at that scale and set $X_d=QP_d$.  For fixed $d$, the
vectors $Qh_{d,k}$ have disjoint supports, which makes the nonzero singular
values of $X_d$ explicit.

\begin{definition}[Multiscale Haar Basis and Projection]
\label{def:haar_basis}
Restricting to the leading principal submatrix of order
$2^{\lfloor\log_2 n\rfloor}$ cannot increase
$\|A\|_{2\to\infty}$ or $\|B\|_{1\to1}$, so we may assume that $n$ is a
power of two. For each scale $d \in \{1, 2, \dots, \log_2 n\}$, let $w_d = 2^d$ denote the spatial support size of the wavelets at scale $d$. There are $n/w_d$ mutually orthogonal wavelets at this scale, denoted $\{h_{d,k}\}_{k=1}^{n/w_d}$, where each $h_{d,k} \in \mathbb{R}^n$ is supported on the interval $I_{d,k} = \{(k-1)w_d + 1, \dots, k w_d\}$. The vector entries are defined as:
\begin{equation}
h_{d,k}(i) = 
\begin{cases} 
\frac{1}{\sqrt{w_d}} & \text{if } i \in \left[(k-1)w_d + 1, (k-1)w_d + \frac{w_d}{2}\right], \\
-\frac{1}{\sqrt{w_d}} & \text{if } i \in \left[(k-1)w_d + \frac{w_d}{2} + 1, k w_d\right], \\
0 & \text{otherwise.}
\end{cases}
\end{equation}
Let $P_d = \sum_{k=1}^{n/w_d} h_{d,k} h_{d,k}^T$ be the orthogonal projection onto the subspace spanned by the Haar wavelets at scale $d$.
\end{definition}

We now evaluate the direct right-sided projection $X_d = Q P_d$. Because $Q$ operates as a prefix-sum matrix, its action on the Haar basis is structured.

\begin{lemma}[Action of $Q$ on the Haar Basis]
\label{lem:Q_action_haar}
For any scale $d \ge 1$ and index $k \in \{1, \dots, n/w_d\}$, the projected vector $g_{d,k} = Q h_{d,k}$ is supported on the interval $I_{d,k} \setminus \{k w_d\}$. Furthermore, the set of vectors $\{g_{d,k}\}_{k=1}^{n/w_d}$ is mutually orthogonal, and the squared $\ell_2$ norm of each vector is exactly
\begin{equation}
\|g_{d,k}\|_2^2 = \frac{w_d^2+2}{12}.
\end{equation}
\end{lemma}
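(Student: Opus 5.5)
The plan is to compute $g_{d,k}=Qh_{d,k}$ explicitly, coordinate by coordinate, since $(Qh)(i)=\sum_{j\le i}h(j)$ is just the running partial sum of $h$. Write $a=(k-1)w_d$ and $s=w_d/2$ (an integer because $d\ge 1$).

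\textbf{Support.} For $i\le a$ the partial sum is $0$, because $h_{d,k}$ vanishes on $\{1,\dots,a\}$. For $i\ge kw_d$ the partial sum covers all of $I_{d,k}$, which contains $s$ entries equal to $+1/\sqrt{w_d}$ and $s$ entries equal to $-1/\sqrt{w_d}$, so it is again $0$. Hence $g_{d,k}$ is supported in $\{a+1,\dots,kw_d-1\}=I_{d,k}\setminus\{kw_d\}$.

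\textbf{Mutual orthogonality.} For fixed $d$ the intervals $I_{d,k}$, $k=1,\dots,n/w_d$, are pairwise disjoint. Therefore the vectors $g_{d,k}$ have pairwise disjoint supports, and their inner products vanish.

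\textbf{Norm.} Set $t=i-a$. Inside the interval,
\[
g_{d,k}(a+t)=\frac{t}{\sqrt{w_d}} \quad (1\le t\le s),
\qquad
g_{d,k}(a+t)=\frac{2s-t}{\sqrt{w_d}} \quad (s<t\le 2s),
\]
a tent function peaking at $t=s$. Summing squares, the rising side contributes $\sum_{t=1}^{s}t^2$ and the falling side contributes $\sum_{u=0}^{s-1}u^2$. Together,
\[
\|g_{d,k}\|_2^2=\frac{1}{w_d}\Bigl(2\sum_{u=1}^{s-1}u^2+s^2\Bigr)
=\frac{1}{w_d}\Bigl(\frac{(s-1)s(2s-1)}{3}+s^2\Bigr).
\]
Simplifying the bracket gives
\[
\frac{s(2s^2-3s+1)+3s^2}{3}=\frac{2s^3+s}{3}=\frac{s(2s^2+1)}{3}.
\]
Substituting $s=w_d/2$ turns this into $\frac{w_d(w_d^2+2)}{12}$. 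Dividing by $w_d$ yields $\|g_{d,k}\|_2^2=\frac{w_d^2+2}{12}$.

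As a sanity check, $w_d=2$ gives $g=(1/\sqrt2,0)$ on the block, with squared norm $1/2=6/12$, which matches. The only step needing care is the tent-sum algebra; everything else is direct.
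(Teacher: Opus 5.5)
Your proposal is correct and takes the same route as the paper: you write out the tent-shaped prefix sums explicitly, get vanishing outside $I_{d,k}\setminus\{kw_d\}$ from the mean-zero property, get orthogonality from disjoint supports, and evaluate the norm as a direct sum of squares. The only difference is cosmetic: you group the sum as $2\sum_{u=1}^{s-1}u^2+s^2$, whereas the paper uses $\sum_{j=1}^{m}j^2+\sum_{j=1}^{m-1}j^2$ with $m=w_d/2$; both give $(w_d^2+2)/12$.
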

\begin{proof}
Because $Q$ is the lower triangular all-ones matrix, the vector $g_{d,k} = Q h_{d,k}$ contains the prefix sums of $h_{d,k}$. That is, $g_{d,k}(i) = \sum_{j=1}^i h_{d,k}(j)$. 
For $i < (k-1)w_d + 1$, the sum is trivially $0$.
Within the first half of the interval, $i \in \left[(k-1)w_d + 1, (k-1)w_d + w_d/2\right]$, the values accumulate linearly: letting $j = i - (k-1)w_d$, we have $g_{d,k}(i) = \frac{j}{\sqrt{w_d}}$.
In the second half, $i \in \left[(k-1)w_d + w_d/2 + 1, k w_d\right]$, the values decrease linearly from the peak: letting $j = i - ((k-1)w_d + w_d/2)$, we have $g_{d,k}(i) = \frac{w_d/2 - j}{\sqrt{w_d}}$.
For $i \ge k w_d$, the sum over the entire Haar wavelet is exactly $0$ because the wavelet has mean zero. Therefore, $g_{d,k}$ is supported on $I_{d,k} \setminus \{k w_d\}$. 
Since the support intervals $I_{d,k}$ are mutually disjoint for different $k$ at a fixed scale $d$, the vectors $g_{d,k}$ are mutually orthogonal.
To compute the squared $\ell_2$ norm, we sum the squared components. Letting $m = w_d/2$:
\begin{align*}
\|g_{d,k}\|_2^2 &= \sum_{j=1}^m \left(\frac{j}{\sqrt{w_d}}\right)^2 + \sum_{j=1}^{m-1} \left(\frac{m - j}{\sqrt{w_d}}\right)^2 \\
&= \frac{1}{w_d} \left( \sum_{j=1}^m j^2 + \sum_{j=1}^{m-1} j^2 \right) \\
&= \frac{1}{2m} \left( \frac{m(m+1)(2m+1)}{6} + \frac{(m-1)m(2m-1)}{6} \right).
\end{align*}
Expanding the terms inside the parenthesis gives $m(2m^2 + 3m + 1 + 2m^2 - 3m + 1) = m(4m^2 + 2)$.
Thus, 
\begin{equation}
\|g_{d,k}\|_2^2 = \frac{1}{2m} \frac{m(4m^2+2)}{6} = \frac{4m^2+2}{12} = \frac{w_d^2+2}{12}.
\end{equation}
\end{proof}

To quantify the energy of $X_d$, we use the Schatten $p$-norms. Because we are targeting the $(M_A M_B)^{2/3}$ factorization cost (via the Fractional H\"older Bound in Section~\ref{sec:fractional-reduction}), we must apply the subadditivity of the Schatten $2/3$ quasi-norm.

\begin{theorem}[$p$-Triangle Inequality for Schatten $2/3$-Norm]
\label{thm:p_triangle}
For any matrices $Y$ and $Z$ of the same dimensions, the Schatten $2/3$
quasi-norm, defined by $\|M\|_{S_{2/3}} = (\sum \sigma_j(M)^{2/3})^{3/2}$,
satisfies the $p$-triangle inequality for $p=2/3$~\cite{Thompson1976}:
\begin{equation}
\|Y + Z\|_{S_{2/3}}^{2/3} \le \|Y\|_{S_{2/3}}^{2/3} + \|Z\|_{S_{2/3}}^{2/3}.
\end{equation}
\end{theorem}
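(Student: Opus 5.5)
The plan is to reduce the statement to a trace inequality for positive semidefinite matrices and then apply a pinching/majorization argument. Write $p=2/3$ and $f(t)=t^{p}$ on $[0,\infty)$. The function $f$ is concave and nonnegative with $f(0)=0$, and $\|M\|_{S_{2/3}}^{2/3}=\sum_j \sigma_j(M)^{p}=\operatorname{tr} f(|M|)$, where $|M|=(M^TM)^{1/2}$. So the claim is
\[
\operatorname{tr} f(|Y+Z|)\le \operatorname{tr} f(|Y|)+\operatorname{tr} f(|Z|).
\]
If $Y,Z$ are rectangular, first pad them with zeros to square matrices; this does not change the nonzero singular values.

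\textbf{Step 1 (Thompson's triangle inequality for absolute values).} I will invoke Thompson's theorem~\cite{Thompson1976}: for square $Y,Z$ there are orthogonal (unitary) matrices $U,V$ such that
\[
|Y+Z|\preceq U|Y|U^T+V|Z|V^T
\]
in the Loewner order. Since $f$ is monotone increasing, the eigenvalues of $|Y+Z|$ are dominated entrywise (in decreasing order) by those of $U|Y|U^T+V|Z|V^T$, by Weyl monotonicity. Hence $\operatorname{tr} f(|Y+Z|)\le \operatorname{tr} f(G+H)$, where $G=U|Y|U^T$ and $H=V|Z|V^T$ are PSD with the same spectra as $|Y|$ and $|Z|$.

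\textbf{Step 2 (Rotfel'd-type subadditivity for PSD matrices).} It remains to show $\operatorname{tr} f(G+H)\le \operatorname{tr} f(G)+\operatorname{tr} f(H)$ for PSD $G,H$.
\begin{itemize}
\item Set $T=[\,G^{1/2}\;\;H^{1/2}\,]$, so that $TT^T=G+H$.
\item The matrix $T^TT=\begin{pmatrix} G & G^{1/2}H^{1/2}\\ H^{1/2}G^{1/2} & H\end{pmatrix}$ has the same nonzero eigenvalues as $TT^T$. Because $f(0)=0$, this gives $\operatorname{tr} f(G+H)=\operatorname{tr} f(T^TT)$.
\item The pinching $\operatorname{diag}(G,H)$ of $T^TT$ has eigenvalue vector majorized by that of $T^TT$. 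This is Schur's theorem, since the pinching is an average of $T^TT$ and its conjugation by $\operatorname{diag}(I,-I)$.
\item A separable sum $\sum_i f(\lambda_i)$ with $f$ concave is Schur-concave, so $\operatorname{tr} f(T^TT)\le \operatorname{tr} f(\operatorname{diag}(G,H))=\operatorname{tr} f(G)+\operatorname{tr} f(H)$.
\end{itemize}
Combining Steps 1 and 2 gives the $p$-triangle inequality.

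\textbf{Main obstacle.} The delicate step is Step 1. The weak majorization $\sigma(Y+Z)\prec_w \sigma(Y)+\sigma(Z)$ from Ky Fan is not sufficient here, because weak majorization only controls increasing convex functions, while $t^{p}$ is concave. The naive factorization $Y+Z=[\,I\;I\,]\operatorname{diag}(Y,Z)[\,I\;I\,]^T$ loses a factor of $2$. I will therefore rely on Thompson's unitary-orbit inequality as cited, rather than re-deriving it; its proof goes through the polar decomposition of $Y+Z$ and a $2\times 2$ block positivity argument.
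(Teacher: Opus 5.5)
Your proof is correct. The paper gives no proof of its own and simply cites Thompson (1976). Your route is the standard derivation of that cited result: first Thompson's Loewner-order inequality $|Y+Z|\preceq U|Y|U^T+V|Z|V^T$, then Rotfel'd's trace inequality $\operatorname{tr} f(G+H)\le \operatorname{tr} f(G)+\operatorname{tr} f(H)$, which you prove fully and correctly via pinching and Schur-concavity.

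One small wording fix in Step 1: the entrywise eigenvalue domination comes from Weyl monotonicity alone, and it is the monotonicity of $f$ that then transfers this domination to the traces. For real inputs you can also just work over $\mathbb{C}$, since singular values are unchanged, which sidesteps the orthogonal-versus-unitary question.
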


\begin{theorem}[Exact Schatten 2/3 Mass of the Direct Projection]
\label{thm:Xd_schatten}
For any scale $d \in \{1, \dots, \log_2 n\}$, the direct right-sided projection $X_d = Q P_d$ has exactly $n/w_d$ non-zero singular values, all equal to $\sigma_d = \sqrt{\frac{w_d^2+2}{12}} = \Theta(w_d)$. Consequently, its Schatten $2/3$ mass is $\Theta(n w_d^{-1/3})$. Specifically, there exists an absolute constant $c_0 = (1/12)^{1/3} > 0$ such that:
\begin{equation}
\|X_d\|_{S_{2/3}}^{2/3} \ge c_0 n w_d^{-1/3}.
\end{equation}
\end{theorem}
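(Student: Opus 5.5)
We will write $X_d$ in the Haar basis and read off its singular value decomposition. Since $P_d=\sum_{k=1}^{n/w_d} h_{d,k}h_{d,k}^T$, we have
\[
X_d = QP_d = \sum_{k=1}^{n/w_d} (Qh_{d,k})h_{d,k}^T = \sum_{k=1}^{n/w_d} g_{d,k}h_{d,k}^T .
\]
By Definition~\ref{def:haar_basis}, the vectors $h_{d,k}$, $k=1,\dots,n/w_d$, are orthonormal. Each has $\|h_{d,k}\|_2^2 = w_d\cdot (1/w_d)=1$, and distinct ones have disjoint supports. By Lemma~\ref{lem:Q_action_haar}, the $g_{d,k}$ are mutually orthogonal with common norm $\sigma_d=\sqrt{(w_d^2+2)/12}$, and this norm is nonzero. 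Setting $u_k=g_{d,k}/\sigma_d$, we obtain
\[
X_d=\sum_k \sigma_d\, u_k h_{d,k}^T,
\]
with $\{u_k\}$ and $\{h_{d,k}\}$ both orthonormal families. This is a (thin) singular value decomposition.

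Equivalently, one can check that $X_d^TX_d=\sigma_d^2 P_d$, because the cross terms $g_{d,k}^Tg_{d,l}$ vanish for $k\ne l$. Hence the eigenvalues of $X_d^TX_d$ are $\sigma_d^2$ with multiplicity $\operatorname{rank}P_d=n/w_d$, and $0$ otherwise. So $X_d$ has exactly $n/w_d$ nonzero singular values, all equal to $\sigma_d$. The estimate $\sigma_d=\Theta(w_d)$ follows from $w_d^2/12\le \sigma_d^2\le w_d^2/4$, which holds since $w_d\ge 2$ gives $2\le w_d^2/2$.

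For the Schatten mass, compute
\[
\|X_d\|_{S_{2/3}}^{2/3}=\frac{n}{w_d}\,\sigma_d^{2/3}=\frac{n}{w_d}\left(\frac{w_d^2+2}{12}\right)^{1/3}\ge \frac{n}{w_d}\left(\frac{w_d^2}{12}\right)^{1/3}=(1/12)^{1/3}\, n\, w_d^{-1/3}.
\]
This gives the claimed lower bound with $c_0=(1/12)^{1/3}$. The matching upper bound $O(nw_d^{-1/3})$ comes from $\sigma_d\le w_d/2$, which yields the $\Theta$ statement.

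There is no real obstacle here. The only point requiring care is the justification that the singular values are exactly $\sigma_d$, and not merely that $X_d$ maps an orthonormal set to an orthogonal one. This needs both the orthonormality of the right factors $h_{d,k}$ and the fact that $X_d$ vanishes on the orthogonal complement of $\operatorname{span}\{h_{d,k}\}$. The latter holds because $P_d$ kills that complement, and the Gram computation $X_d^TX_d=\sigma_d^2P_d$ handles it cleanly.
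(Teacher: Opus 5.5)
Your proposal is correct and follows essentially the same route as the paper: you expand $X_d=\sum_k g_{d,k}h_{d,k}^T$, normalize the mutually orthogonal $g_{d,k}$ to read off a thin SVD with $n/w_d$ singular values equal to $\sqrt{(w_d^2+2)/12}$, and bound $(w_d^2+2)/12\ge w_d^2/12$ to get $c_0=(1/12)^{1/3}$. Your Gram check $X_d^TX_d=\sigma_d^2P_d$ and the explicit upper bound on $\sigma_d$ are small additions that make rigorous what the paper only asserts, namely that the SVD is exact and that the $\Theta$ upper bound holds.
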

\begin{proof}
We can explicitly expand the operator $X_d$ using the orthonormal basis for the range of $P_d$:
\begin{equation}
X_d = Q \left( \sum_{k=1}^{n/w_d} h_{d,k} h_{d,k}^T \right) = \sum_{k=1}^{n/w_d} (Q h_{d,k}) h_{d,k}^T = \sum_{k=1}^{n/w_d} g_{d,k} h_{d,k}^T.
\end{equation}
By defining the normalized vectors $u_{d,k} = g_{d,k} / \|g_{d,k}\|_2$, we can rewrite $X_d$ as:
\begin{equation}
X_d = \sum_{k=1}^{n/w_d} \|g_{d,k}\|_2 \, u_{d,k} h_{d,k}^T.
\end{equation}
By Lemma \ref{lem:Q_action_haar}, the set $\{u_{d,k}\}_{k=1}^{n/w_d}$ is orthonormal. By definition, $\{h_{d,k}\}_{k=1}^{n/w_d}$ is also an orthonormal set. Thus, this sum constitutes the compact Singular Value Decomposition of $X_d$. 
The non-zero singular values are exactly $\sigma_d = \|g_{d,k}\|_2 = \sqrt{\frac{w_d^2+2}{12}}$, and there are exactly $n/w_d$ such values.
Summing the $2/3$ powers of these singular values gives the exact Schatten $2/3$ mass:
\begin{equation}
\|X_d\|_{S_{2/3}}^{2/3} = \frac{n}{w_d} \left( \frac{w_d^2+2}{12} \right)^{1/3} = n \left( \frac{1}{12} \right)^{1/3} w_d^{-1/3} \left(1 + \frac{2}{w_d^2}\right)^{1/3}.
\end{equation}
Since $w_d \ge 2$, the term $(1 + 2/w_d^2)^{1/3}$ is greater than $1$. We can therefore bound the mass from below by setting $c_0 = (1/12)^{1/3}$:
\begin{equation}
\|X_d\|_{S_{2/3}}^{2/3} \ge c_0 n w_d^{-1/3}.
\end{equation}
Because the upper bound is also bounded by a constant factor for all $w_d \ge 2$, the mass is $\Theta(n w_d^{-1/3})$.
\end{proof}

\subsection{Numerical Sparsity and Row-Submatrix Spectral Bounds}
\label{sec:numerical-sparsity}
In order to connect the local $\ell_2$ capacity of the factorization factors with their global $\ell_1$ footprint, we use the numerical sparsity of the rows of $B$. This quantity measures effective support size and will allow us to bound the spectral norm of any sub-frame using matrix norm interpolation.

\begin{definition}[Numerical Sparsity]
\label{def:numerical_sparsity}
For any row vector $b_k = (B_{k,:})^T \in \mathbb{R}^n$ of the factorization factor $B$, its \emph{numerical sparsity} $S_k$ is defined as the ratio of its squared $\ell_1$ norm to its squared $\ell_2$ norm:
\begin{equation}
S_k = \frac{\|b_k\|_1^2}{\|b_k\|_2^2} = \frac{L_k^2}{\|b_k\|_2^2},
\end{equation}
where $L_k = \|b_k\|_1$. If $b_k = 0$, we define $S_k = 1$.

By the standard relations of $\ell_p$ norms in $\mathbb{R}^n$, we always have $1 \le S_k \le n$. The parameter $S_k$ is an effective support-size measure: $S_k = \Theta(1)$ for numerically sparse vectors (e.g., standard basis vectors), while $S_k = \Theta(n)$ for flat vectors.
\end{definition}

\begin{lemma}[Row Capacity Bound via Numerical Sparsity]
\label{lem:row_capacity_sparsity}
For any row $k$, its $\ell_1$ capacity $L_k$ is bounded by its numerical sparsity $S_k$ and the global column max-norm $M_B = \|B\|_{1 \to 1}$:
\begin{equation}
L_k \le M_B S_k.
\end{equation}
\end{lemma}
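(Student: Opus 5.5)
We need $L_k \le M_B S_k$, i.e. $L_k \le M_B L_k^2/\|b_k\|_2^2$. Equivalently, for $b_k\neq 0$, $\|b_k\|_2^2 \le M_B L_k$.

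The plan is to first dispose of the degenerate case and then reduce the claim to a comparison between the $\ell_2$ and $\ell_1$ norms of the row. If $b_k=0$, then $L_k=0$ while $S_k=1$ and $M_B\ge 0$, so the inequality holds trivially. For $b_k\neq 0$ we have $L_k>0$. Multiplying the target inequality by $\|b_k\|_2^2/L_k>0$ and using the definition of $S_k$ from Definition~\ref{def:numerical_sparsity}, the claim becomes equivalent to
\[
\|b_k\|_2^2 \le M_B\, L_k .
\]

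To prove this, we bound each entry of the row by the column norm. Each entry satisfies
\[
|B_{k,j}| \le \sum_{k'}|B_{k',j}| = \|B_{:,j}\|_1 \le M_B,
\]
so $\|b_k\|_\infty\le M_B$. Hölder's inequality between $\ell_1$ and $\ell_\infty$ then gives
\[
\|b_k\|_2^2=\sum_j |B_{k,j}|\cdot|B_{k,j}| \le \|b_k\|_\infty \|b_k\|_1 \le M_B L_k ,
\]
which is exactly the reduced inequality.

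There is no real obstacle here. The only point needing care is the division by $L_k$, which the separate zero-row case handles. The key observation is that an entry of $B$ is bounded by the $\ell_1$ norm of its column, and hence by $M_B$.
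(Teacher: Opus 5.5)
Your proposal is correct and follows essentially the same route as the paper: bound each entry by its column $\ell_1$ norm to get $|B_{k,j}|\le M_B$, deduce $\|b_k\|_2^2\le M_B L_k$, and rearrange using the definition of $S_k$. Your separate treatment of the zero row matches the paper's remark that the case $L_k=0$ is trivial.
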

\begin{proof}
Let $B_{k,j}$ denote the entries of $B$. The definition of $M_B = \max_j \sum_i |B_{i,j}|$ implies that every individual entry is bounded in absolute value by $M_B$, i.e., $|B_{k,j}| \le M_B$ for all $k, j$.

We bound the squared $\ell_2$ norm of the $k$-th row by factoring out the maximum absolute entry:
\begin{equation}
\|b_k\|_2^2 = \sum_{j=1}^n B_{k,j}^2 \le \sum_{j=1}^n |B_{k,j}| M_B = M_B \|b_k\|_1 = M_B L_k.
\end{equation}

Substituting this upper bound into the definition of numerical sparsity yields:
\begin{equation}
S_k = \frac{L_k^2}{\|b_k\|_2^2} \ge \frac{L_k^2}{M_B L_k} = \frac{L_k}{M_B}.
\end{equation}
Rearranging the inequality gives $L_k \le M_B S_k$. (If $L_k = 0$, the inequality holds trivially).
\end{proof}

\begin{theorem}[Spectral Bound for a Row Submatrix]
\label{thm:gram_spectral_bound}
Let $B_{sub}$ be any submatrix of $B$ formed by selecting an arbitrary subset $\mathcal{I} \subseteq \{1, \dots, m\}$ of its rows. Then the maximum eigenvalue of its Gram matrix $P_{sub} = B_{sub}^T B_{sub}$ is bounded by:
\begin{equation}
\lambda_{\max}(B_{sub}^T B_{sub}) \le M_B^2 \max_{k \in \mathcal{I}} S_k.
\end{equation}
\end{theorem}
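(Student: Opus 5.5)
The plan is to bound the spectral norm of $B_{sub}$ by interpolating between its $\ell_1\to\ell_1$ and $\ell_\infty\to\ell_\infty$ operator norms, and then to control the $\ell_\infty\to\ell_\infty$ norm using Lemma~\ref{lem:row_capacity_sparsity}. First, since $\lambda_{\max}(B_{sub}^TB_{sub})=\|B_{sub}\|_{2\to2}^2$, it suffices to show
\[
\|B_{sub}\|_{2\to2}^2\le M_B^2\max_{k\in\mathcal I}S_k .
\]

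The key tool is the standard interpolation (Schur test / Riesz--Thorin) inequality
\[
\|M\|_{2\to2}^2\le \|M\|_{1\to1}\,\|M\|_{\infty\to\infty},
\]
valid for any real matrix $M$. Here $\|M\|_{1\to1}$ is the maximum column $\ell_1$ norm and $\|M\|_{\infty\to\infty}$ is the maximum row $\ell_1$ norm. If a self-contained argument is preferred, I would prove it directly. For any $x$, Cauchy--Schwarz gives
\[
\Bigl(\sum_j |M_{kj}||x_j|\Bigr)^2\le \Bigl(\sum_j|M_{kj}|\Bigr)\Bigl(\sum_j |M_{kj}|x_j^2\Bigr).
\]
Bound the first factor by $\|M\|_{\infty\to\infty}$, sum over $k$, and bound $\sum_k|M_{kj}|\le\|M\|_{1\to1}$. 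This yields
\[
\|Mx\|_2^2\le \|M\|_{1\to1}\|M\|_{\infty\to\infty}\|x\|_2^2 .
\]

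Next I would estimate the two norms for $M=B_{sub}$. Deleting rows can only decrease column $\ell_1$ norms, so $\|B_{sub}\|_{1\to1}\le\|B\|_{1\to1}=M_B$. The row $\ell_1$ norms of $B_{sub}$ are exactly $L_k$ for $k\in\mathcal I$. Lemma~\ref{lem:row_capacity_sparsity} gives $L_k\le M_BS_k$, hence
\[
\|B_{sub}\|_{\infty\to\infty}=\max_{k\in\mathcal I}L_k\le M_B\max_{k\in\mathcal I}S_k .
\]
Multiplying the two estimates gives the claimed bound. The degenerate cases (zero rows, empty $\mathcal I$) are trivial, since $S_k=1$ by convention and the bound only becomes easier.

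There is no real obstacle here. The only point needing care is the interpolation inequality, which I would include via the Cauchy--Schwarz argument above rather than cite. The substantive content is simply the observation that the row $\ell_1$ norms are controlled by numerical sparsity through Lemma~\ref{lem:row_capacity_sparsity}.
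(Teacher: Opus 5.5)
Your proposal is correct and follows the paper's proof essentially step for step. You use the same Schur-test bound $\|B_{sub}\|_2^2\le\|B_{sub}\|_{1\to1}\|B_{sub}\|_{\infty\to\infty}$, checked by the identical Cauchy--Schwarz computation, together with the row-deletion bound $\|B_{sub}\|_{1\to1}\le M_B$ and Lemma~\ref{lem:row_capacity_sparsity} to control $\max_{k\in\mathcal I}L_k$.
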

\begin{proof}

The maximum eigenvalue of the positive semi-definite matrix $B_{sub}^T B_{sub}$ is exactly the squared spectral operator norm of $B_{sub}$:
\begin{equation}
\lambda_{\max}(B_{sub}^T B_{sub}) = \|B_{sub}\|_2^2.
\end{equation}

By Schur's test (or the Riesz-Thorin interpolation theorem), the $\ell_2 \to \ell_2$ operator norm of any matrix is bounded by the product of its $\ell_1 \to \ell_1$ norm (maximum absolute column sum) and its $\ell_\infty \to \ell_\infty$ norm (maximum absolute row sum):
\begin{equation}
\|B_{sub}\|_2^2 \le \|B_{sub}\|_{1 \to 1} \|B_{sub}\|_{\infty \to \infty}.
\end{equation}

We can verify this. For any $x \in \mathbb{R}^n$, by the Cauchy-Schwarz inequality on the inner sums:
\begin{align}
\|B_{sub} x\|_2^2 &= \sum_{k \in \mathcal{I}} \left( \sum_{j=1}^n B_{k,j} x_j \right)^2 \le \sum_{k \in \mathcal{I}} \left( \sum_{j=1}^n |B_{k,j}| |x_j| \right)^2 \nonumber \\
&\le \sum_{k \in \mathcal{I}} \left( \sum_{j=1}^n |B_{k,j}| \right) \left( \sum_{j=1}^n |B_{k,j}| x_j^2 \right) \nonumber \\
&\le \|B_{sub}\|_{\infty \to \infty} \sum_{k \in \mathcal{I}} \sum_{j=1}^n |B_{k,j}| x_j^2 \nonumber \\
&= \|B_{sub}\|_{\infty \to \infty} \sum_{j=1}^n x_j^2 \left( \sum_{k \in \mathcal{I}} |B_{k,j}| \right) \nonumber \\
&\le \|B_{sub}\|_{\infty \to \infty} \|B_{sub}\|_{1 \to 1} \|x\|_2^2.
\end{align}

Because $B_{sub}$ is a submatrix of $B$ containing a subset of its rows, its maximum absolute column sum cannot exceed that of the full matrix $B$. Thus,
\begin{equation}
\|B_{sub}\|_{1 \to 1} \le \|B\|_{1 \to 1} = M_B.
\end{equation}

The maximum absolute row sum of $B_{sub}$ is exactly the maximum $\ell_1$ capacity among the selected rows:
\begin{equation}
\|B_{sub}\|_{\infty \to \infty} = \max_{k \in \mathcal{I}} \|b_k\|_1 = \max_{k \in \mathcal{I}} L_k.
\end{equation}

Applying Lemma \ref{lem:row_capacity_sparsity}, we bound the capacity of any row $k \in \mathcal{I}$ by $L_k \le M_B S_k$. Therefore:
\begin{equation}
\|B_{sub}\|_{\infty \to \infty} \le M_B \max_{k \in \mathcal{I}} S_k.
\end{equation}

Substituting both of these structural bounds back into Schur's test yields:
\begin{equation}
\|B_{sub}\|_2^2 \le M_B \left( M_B \max_{k \in \mathcal{I}} S_k \right) = M_B^2 \max_{k \in \mathcal{I}} S_k.
\end{equation}
This completes the proof.
\end{proof}

\begin{remark}
Theorem \ref{thm:gram_spectral_bound} bridges the gap between the $\ell_1$ geometry of $B$ and the spectral $\ell_2$ capacity required to factorize $Q$. If an adversarial factorization attempts to deploy a sub-frame of numerically sparse vectors (where $S_k \le O(1)$), the row-submatrix spectral bound constrains the squared spectral norm of this sub-frame to $O(M_B^2)$. This rigid constraint limits how much Frobenius mass numerically sparse vectors can extract from any individual scale, forcing the factorization to rely on diffuse capacity to satisfy the matrix step singularities.
\end{remark}

\subsection{Numerically Sparse-Part Frobenius Bound}
\label{sec:trace-bottleneck}
To dissect the cost of synthesizing the lower triangular matrix $Q$, we analyze the factorization at a fixed spatial scale $d \in \{1, \dots, \log_2 n\}$ with spatial width $w_d = 2^d$. We partition the intermediate vectors into two classes based on their numerical sparsity $S_k$: those below the scale-dependent threshold (numerically sparse) and those above it (diffuse).

\begin{definition}[Sparse--Diffuse Partition]
\label{def:sparse_diffuse_partition}
Let $Q = AB$ be a matrix factorization with maximal norms $M_A$ and $M_B$. We define the factorization capacity parameter $\Delta = M_A^2 M_B^2$. Since $Q$ is non-zero, any exact factorization requires $M_A > 0$ and $M_B > 0$, ensuring $\Delta > 0$.
For a given scale $d$ and any tuning parameter $\epsilon > 0$, we partition the intermediate dimensions $k \in \{1, \dots, m\}$ into two disjoint sets:
\begin{align}
\mathcal{I}_{\mathrm{sparse}}(d) &= \left\{ k \in \{1, \dots, m\} \;\middle|\; S_k \le \frac{\epsilon w_d}{\Delta} \right\}, \\
\mathcal{I}_{\mathrm{diffuse}}(d) &= \left\{ k \in \{1, \dots, m\} \;\middle|\; S_k > \frac{\epsilon w_d}{\Delta} \right\},
\end{align}
where $S_k = \|B_{k,:}\|_1^2 / \|B_{k,:}\|_2^2$ is the numerical sparsity defined in Definition~\ref{def:numerical_sparsity}.
\end{definition}

Corresponding to this partition, we split the factors $A$ and $B$. Let $A_{\mathrm{sparse}}$ and $B_{\mathrm{sparse}}$ be the submatrices consisting of the columns of $A$ and rows of $B$, respectively, indexed by $\mathcal{I}_{\mathrm{sparse}}(d)$. Let $A_{\mathrm{diffuse}}$ and $B_{\mathrm{diffuse}}$ be the complementary submatrices indexed by $\mathcal{I}_{\mathrm{diffuse}}(d)$. 

The direct right-sided Haar projection $X_d = Q P_d$ at scale $d$ decomposes into the sum of the sparse and diffuse projections:
\begin{equation}
X_d = A B P_d = A_{\mathrm{sparse}} B_{\mathrm{sparse}} P_d + A_{\mathrm{diffuse}} B_{\mathrm{diffuse}} P_d = X_{\mathrm{sparse}} + X_{\mathrm{diffuse}}.
\end{equation}

Our goal is to bound the Frobenius mass of the sparse projection $X_{\mathrm{sparse}}$. The squared Frobenius norm, $\|X_{\mathrm{sparse}}\|_F^2 = \text{tr}(X_{\mathrm{sparse}}^T X_{\mathrm{sparse}})$, captures the total $\ell_2$ energy that the rows below the numerical-sparsity threshold can contribute to the Haar scale $d$. 

\begin{lemma}[Submultiplicativity of Frobenius and Spectral Norms]
\label{lem:frob_spectral_submult}
For any real matrices $U$ and $V$ of compatible dimensions, the Frobenius norm of their product is bounded by the product of the Frobenius norm of $U$ and the spectral norm of $V$:
\begin{equation}
\|U V\|_F \le \|U\|_F \|V\|_2.
\end{equation}
\end{lemma}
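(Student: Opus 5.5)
The plan is to reduce the inequality to a row-by-row estimate. We expand the squared Frobenius norm of $UV$ over the rows of $U$ and bound each row's image using the operator-norm definition of $\|V\|_2$.

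First, write $u_1^T,\dots,u_r^T$ for the rows of $U$. The $i$-th row of $UV$ is then $u_i^T V$, or equivalently the transpose of $V^T u_i$. It follows that
\[
\|UV\|_F^2=\sum_{i}\|V^T u_i\|_2^2 .
\]

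Second, apply the definition of the spectral norm to each summand. This gives $\|V^T u_i\|_2\le \|V^T\|_2\|u_i\|_2$. We also use the standard identity $\|V^T\|_2=\|V\|_2$, which holds because $V$ and $V^T$ have the same nonzero singular values; equivalently, $\lambda_{\max}(VV^T)=\lambda_{\max}(V^TV)$.

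Third, summing over $i$ yields
\[
\|UV\|_F^2\le \|V\|_2^2\sum_i\|u_i\|_2^2=\|V\|_2^2\|U\|_F^2 .
\]
Taking square roots completes the argument. If one prefers to avoid transposes, the same bound follows from a trace argument: $\mathrm{tr}(U VV^T U^T)\le \lambda_{\max}(VV^T)\,\mathrm{tr}(UU^T)$, since $VV^T\preceq \lambda_{\max}(VV^T) I$ and conjugation by $U$ preserves the Loewner order.

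There is no genuine obstacle. The only point needing care is the identification $\|V^T\|_2=\|V\|_2$, or equivalently the Loewner bound in the trace version, and that is standard linear algebra. The lemma will then be applied with $U=A_{\mathrm{sparse}}$ and $V=B_{\mathrm{sparse}}P_d$, combined with Theorem~\ref{thm:gram_spectral_bound} and $\|P_d\|_2\le 1$.
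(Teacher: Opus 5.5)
Your proof is correct and is essentially the paper's argument. The paper writes $\|UV\|_F^2=\mathrm{tr}(U^TU\,VV^T)$ and applies $\mathrm{tr}(YZ)\le\lambda_{\max}(Z)\,\mathrm{tr}(Y)$ for positive semidefinite $Y,Z$, which is your trace variant; your row-by-row bound $\|V^Tu_i\|_2^2=u_i^TVV^Tu_i\le\lambda_{\max}(VV^T)\|u_i\|_2^2$ is the same estimate written out one row at a time.
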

\begin{proof}
By the definition of the Frobenius norm via the trace operator and the cyclic property of the trace, we have:
\begin{equation*}
\|UV\|_F^2 = \text{tr}((UV)^T (UV)) = \text{tr}(V^T U^T U V) = \text{tr}(U^T U V V^T).
\end{equation*}
Since $U^T U$ and $V V^T$ are positive semi-definite matrices, we can apply the trace inequality $\text{tr}(Y Z) \le \lambda_{\max}(Z) \text{tr}(Y)$ for positive semi-definite matrices $Y$ and $Z$. Setting $Y = U^T U$ and $Z = V V^T$, we obtain:
\begin{equation*}
\text{tr}(U^T U V V^T) \le \lambda_{\max}(V V^T) \text{tr}(U^T U) = \|V\|_2^2 \|U\|_F^2.
\end{equation*}
Taking the square root yields $\|UV\|_F \le \|U\|_F \|V\|_2$.
\end{proof}

\begin{theorem}[Numerically Sparse-Part Frobenius Bound]
\label{thm:sparse_frob_bound}
For any scale $d$ and tuning parameter $\epsilon > 0$, the squared Frobenius mass of the sparse projection is bounded by:
\begin{equation}
\|X_{\mathrm{sparse}}\|_F^2 \le \epsilon n w_d.
\end{equation}
\end{theorem}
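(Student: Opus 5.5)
We bound $\|X_{\mathrm{sparse}}\|_F^2$ by separating the two factors with Lemma~\ref{lem:frob_spectral_submult} and controlling each one with a global bound. Write $X_{\mathrm{sparse}} = A_{\mathrm{sparse}}(B_{\mathrm{sparse}}P_d)$ and set $U=A_{\mathrm{sparse}}$, $V=B_{\mathrm{sparse}}P_d$. Lemma~\ref{lem:frob_spectral_submult} then gives
\[
\|X_{\mathrm{sparse}}\|_F^2 \le \|A_{\mathrm{sparse}}\|_F^2\,\|B_{\mathrm{sparse}}P_d\|_2^2 .
\]

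For the first factor, $A_{\mathrm{sparse}}$ consists of a subset of the columns of $A$. Hence its squared Frobenius norm is at most $\|A\|_F^2=\sum_k N_k\le nM_A^2$, by Lemma~\ref{lem:volume_constraints}.

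For the second factor, $P_d$ is an orthogonal projection, so $\|P_d\|_2\le 1$ and $\|B_{\mathrm{sparse}}P_d\|_2\le \|B_{\mathrm{sparse}}\|_2$. Theorem~\ref{thm:gram_spectral_bound} applied to the row set $\mathcal{I}_{\mathrm{sparse}}(d)$ gives
\[
\|B_{\mathrm{sparse}}\|_2^2\le M_B^2\max_{k\in\mathcal{I}_{\mathrm{sparse}}(d)}S_k\le M_B^2\cdot\frac{\epsilon w_d}{\Delta},
\]
where the last step uses the defining threshold of the sparse class.

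Multiplying the two bounds gives
\[
\|X_{\mathrm{sparse}}\|_F^2\le nM_A^2\cdot M_B^2\,\frac{\epsilon w_d}{\Delta}=\epsilon n w_d ,
\]
since $\Delta=M_A^2M_B^2$. If $\mathcal{I}_{\mathrm{sparse}}(d)$ is empty, then $X_{\mathrm{sparse}}=0$ and the bound holds trivially.

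No step is genuinely hard. The only point needing care is that the spectral bound must use the maximum of $S_k$ over the sparse rows only, which is exactly what Theorem~\ref{thm:gram_spectral_bound} provides. We also note that the definition of $\Delta$ was chosen so that the $M_A$ and $M_B$ factors cancel exactly.
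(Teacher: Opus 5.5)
Your proposal is correct and follows the paper's proof essentially step for step. Both arguments split $X_{\mathrm{sparse}}=A_{\mathrm{sparse}}(B_{\mathrm{sparse}}P_d)$ via Lemma~\ref{lem:frob_spectral_submult}, bound $\|A_{\mathrm{sparse}}\|_F^2\le nM_A^2$ by Lemma~\ref{lem:volume_constraints}, apply $\|P_d\|_2\le 1$ together with Theorem~\ref{thm:gram_spectral_bound} and the sparse threshold $S_k\le \epsilon w_d/\Delta$, cancel via $\Delta=M_A^2M_B^2$, and treat the empty-index case trivially.
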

\begin{proof}
If $\mathcal{I}_{\mathrm{sparse}}(d)$ is empty, $X_{\mathrm{sparse}} = 0$ and the bound holds trivially. Assume $\mathcal{I}_{\mathrm{sparse}}(d)$ is non-empty.

We apply Lemma~\ref{lem:frob_spectral_submult} to the product $X_{\mathrm{sparse}} = A_{\mathrm{sparse}} (B_{\mathrm{sparse}} P_d)$:
\begin{equation}
\label{eq:Xsparse_frob_bound}
\|X_{\mathrm{sparse}}\|_F^2 \le \|A_{\mathrm{sparse}}\|_F^2 \|B_{\mathrm{sparse}} P_d\|_2^2.
\end{equation}
Because $P_d$ is an orthogonal projection operator, its spectral norm is $\|P_d\|_2 = 1$. By the submultiplicativity of the spectral norm, we bound the operator norm of the right factor:
\begin{equation}
\|B_{\mathrm{sparse}} P_d\|_2^2 \le \|B_{\mathrm{sparse}}\|_2^2 \|P_d\|_2^2 = \|B_{\mathrm{sparse}}\|_2^2.
\end{equation}
By the Spectral Bound for a Row Submatrix (Theorem~\ref{thm:gram_spectral_bound}), the maximum eigenvalue of the Gram matrix $B_{\mathrm{sparse}}^T B_{\mathrm{sparse}}$ (which equals $\|B_{\mathrm{sparse}}\|_2^2$) is constrained by the maximum numerical sparsity among its rows:
\begin{equation}
\|B_{\mathrm{sparse}}\|_2^2 \le M_B^2 \max_{k \in \mathcal{I}_{\mathrm{sparse}}(d)} S_k.
\end{equation}
By the definition of the sparse--diffuse partition (Definition~\ref{def:sparse_diffuse_partition}), every index $k \in \mathcal{I}_{\mathrm{sparse}}(d)$ satisfies $S_k \le \frac{\epsilon w_d}{\Delta}$. Therefore:
\begin{equation}
\label{eq:Bsparse_spectral_bound}
\|B_{\mathrm{sparse}} P_d\|_2^2 \le M_B^2 \left( \frac{\epsilon w_d}{\Delta} \right).
\end{equation}

Next, we bound the Frobenius norm of $A_{\mathrm{sparse}}$. The squared Frobenius norm is exactly the sum of the squared $\ell_2$ norms of the columns of $A_{\mathrm{sparse}}$. Since $A_{\mathrm{sparse}}$ consists of a subset of the columns of $A$, this sum cannot exceed the total squared $\ell_2$ energy of all columns of $A$. Recalling the definition $N_k = \|A_{:,k}\|_2^2$ and using the Global Factor-Sum Bounds (Lemma~\ref{lem:volume_constraints}), we obtain:
\begin{equation}
\label{eq:Asparse_frob_bound}
\|A_{\mathrm{sparse}}\|_F^2 = \sum_{k \in \mathcal{I}_{\mathrm{sparse}}(d)} N_k \le \sum_{k=1}^m N_k \le n M_A^2.
\end{equation}

Substituting the spectral bound \eqref{eq:Bsparse_spectral_bound} and the Frobenius bound \eqref{eq:Asparse_frob_bound} into our initial inequality \eqref{eq:Xsparse_frob_bound} yields:
\begin{equation}
\|X_{\mathrm{sparse}}\|_F^2 \le (n M_A^2) \left( M_B^2 \frac{\epsilon w_d}{\Delta} \right) = n M_A^2 M_B^2 \frac{\epsilon w_d}{\Delta}.
\end{equation}
Finally, substituting the factorization capacity parameter $\Delta = M_A^2 M_B^2$ cancels the maximal norms, giving exactly:
\begin{equation}
\|X_{\mathrm{sparse}}\|_F^2 \le \epsilon n w_d.
\end{equation}
This completes the proof.
\end{proof}

\begin{corollary}[Rank of the Sparse Projection]
\label{cor:rank_sparse}
The rank of the sparse projection $X_{\mathrm{sparse}}$ is bounded by the dimension of the Haar wavelet subspace at scale $d$:
\begin{equation}
\text{rank}(X_{\mathrm{sparse}}) \le \frac{n}{w_d}.
\end{equation}
\end{corollary}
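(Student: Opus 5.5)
The plan is to factor $X_{\mathrm{sparse}}$ through the projection $P_d$ and then use the fact that the rank of a product is at most the rank of any of its factors. By definition,
\[
X_{\mathrm{sparse}} = A_{\mathrm{sparse}} B_{\mathrm{sparse}} P_d = (A_{\mathrm{sparse}} B_{\mathrm{sparse}})\,P_d,
\]
so $\text{rank}(X_{\mathrm{sparse}}) \le \text{rank}(P_d)$. Equivalently, every row of $X_{\mathrm{sparse}}$ is a row vector of the form $y^T P_d$. Since $P_d$ is symmetric, each such row is the transpose of $P_d y$, which lies in $\text{span}\{h_{d,k}\}_{k=1}^{n/w_d}$. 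The row space of $X_{\mathrm{sparse}}$ is therefore contained in this span.

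The remaining step is to identify $\text{rank}(P_d)$. By Definition~\ref{def:haar_basis}, $P_d = \sum_{k=1}^{n/w_d} h_{d,k}h_{d,k}^T$, where the $h_{d,k}$ are unit vectors with pairwise disjoint supports $I_{d,k}$ and hence are orthonormal. So $P_d$ is the orthogonal projection onto an $(n/w_d)$-dimensional subspace, and $\text{rank}(P_d) = n/w_d$. This is consistent with Theorem~\ref{thm:Xd_schatten}, which finds exactly $n/w_d$ nonzero singular values for $X_d = QP_d$. Combining the two steps gives $\text{rank}(X_{\mathrm{sparse}}) \le n/w_d$.

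No step is a real obstacle. The one point worth stating explicitly is that the bound does not use the sparse--diffuse partition at all. It holds for any matrix of the form $ZP_d$, so it applies equally to $X_{\mathrm{diffuse}}$ and to $X_d$ itself. This uniformity is what later allows the rank bound to be combined with the Frobenius bound of Theorem~\ref{thm:sparse_frob_bound}.
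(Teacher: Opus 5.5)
Your proposal is correct and follows essentially the same argument as the paper. The paper likewise observes that the row space of $X_{\mathrm{sparse}} = A_{\mathrm{sparse}} B_{\mathrm{sparse}} P_d$ lies in the image of $P_d$, and that $\text{rank}(P_d) = n/w_d$ because the $h_{d,k}$ are orthonormal.
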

\begin{proof}
Because $X_{\mathrm{sparse}} = A_{\mathrm{sparse}} B_{\mathrm{sparse}} P_d$, its row space is contained within the image of $P_d$. By Definition~\ref{def:haar_basis}, $P_d$ is the orthogonal projection onto the $n/w_d$ mutually orthogonal Haar wavelets at scale $d$. Therefore, the rank of $P_d$ is exactly $n/w_d$, which guarantees that $\text{rank}(X_{\mathrm{sparse}}) \le \text{rank}(P_d) = n/w_d$.
\end{proof}

\begin{remark}
Theorem~\ref{thm:sparse_frob_bound} and Corollary~\ref{cor:rank_sparse} give the numerically sparse-part Frobenius and rank bounds: despite having access to the full $\ell_2$ volume of the factors, numerically sparse vectors are spectrally capped by their small numerical sparsity. The Frobenius mass they can project onto scale $d$ is bounded by $\epsilon n w_d$. Because the direct projection $X_d$ has a macroscopic Schatten 2/3 mass of $\Theta(n w_d^{-1/3})$ (as established in Section~\ref{sec:haar-projection}), the factorization must deploy diffuse capacity to synthesize the matrix.
\end{remark}

\subsection{Rank--Frobenius to Schatten Transfer and Diffuse-Part Mass}
\label{sec:schatten-transfer}
In Section~\ref{sec:trace-bottleneck}, we established that vectors below the numerical-sparsity threshold are spectrally capped: their small numerical sparsity limits the Frobenius mass they can project onto any spatial scale. We now formalize a mechanism to translate this Frobenius bound into a Schatten $2/3$ quasi-norm bound. Because numerically sparse vectors operate within a restricted low-rank subspace defined by the multiscale Haar expansion, their ability to inflate the Schatten $2/3$ mass using many small singular values is blocked. This forces the diffuse components of the factorization to carry a fixed positive fraction of the required Schatten $2/3$ mass.

\begin{lemma}[Rank--Frobenius to Schatten Transfer]
\label{lem:rank_schatten_transfer}
For any real matrix $M$ with rank at most $r$ and squared Frobenius norm $\|M\|_F^2 \le V$, its Schatten $2/3$ quasi-norm evaluated to the $2/3$ power is bounded by:
\begin{equation}
\|M\|_{S_{2/3}}^{2/3} \le V^{1/3} r^{2/3}.
\end{equation}
\end{lemma}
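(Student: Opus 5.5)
The bound follows from H\"older's inequality applied to the at most $r$ nonzero singular values of $M$. Let $\sigma_1,\dots,\sigma_s$ denote the nonzero singular values of $M$, where $s=\operatorname{rank}(M)\le r$. The two quantities in play are
\[
\|M\|_{S_{2/3}}^{2/3}=\sum_{j=1}^s\sigma_j^{2/3}
\quad\text{and}\quad
\|M\|_F^2=\sum_{j=1}^s\sigma_j^2\le V.
\]
If $s=0$, then $M=0$ and the bound is trivial, so I assume $s\ge1$.

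The key step is H\"older's inequality with conjugate exponents $p=3$ and $q=3/2$, applied to the products $\sigma_j^{2/3}\cdot 1$. This gives
\[
\sum_{j=1}^s\sigma_j^{2/3}\le\Bigl(\sum_{j=1}^s\sigma_j^{2}\Bigr)^{1/3}\Bigl(\sum_{j=1}^s 1\Bigr)^{2/3}=\|M\|_F^{2/3}s^{2/3}.
\]
Equivalently, one can use concavity of $t\mapsto t^{1/3}$ (Jensen's inequality on the values $\sigma_j^2$). The right-hand side equals $(\|M\|_F^2)^{1/3}s^{2/3}$. Since $\|M\|_F^2\le V$ and $s\le r$, it is at most $V^{1/3}r^{2/3}$.

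I do not expect a real obstacle here. The only care needed is to sum over the nonzero singular values only, so that the count entering the H\"older step is the rank and not the ambient dimension, and to keep track of the exponent bookkeeping between $\|M\|_F$ and $\|M\|_F^2$.
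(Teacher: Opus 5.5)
Your proposal is correct and follows the paper's proof essentially line for line: H\"older's inequality with exponents $p=3$, $q=3/2$ applied to $\sigma_j^{2/3}\cdot 1$ over the nonzero singular values, then substituting $\|M\|_F^2\le V$ and $s\le r$. Your care in summing only over the nonzero singular values, so that the count entering the bound is the rank, is exactly the point the paper's argument also relies on.
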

\begin{proof}
Let $\sigma_1, \dots, \sigma_k$ be the non-zero singular values of $M$, where $k \le r$. By definition, the squared Frobenius norm is the sum of squared singular values, $\sum_{i=1}^k \sigma_i^2 = \|M\|_F^2 \le V$, and the targeted Schatten mass is $\sum_{i=1}^k \sigma_i^{2/3}$.

We apply H\"older's inequality to the sum $\sum_{i=1}^k \sigma_i^{2/3} \cdot 1$ with conjugate exponents $p=3$ and $q=3/2$ (which satisfy $1/p + 1/q = 1/3 + 2/3 = 1$):
\begin{equation*}
\sum_{i=1}^k \sigma_i^{2/3} \cdot 1 \le \left( \sum_{i=1}^k \left(\sigma_i^{2/3}\right)^3 \right)^{1/3} \left( \sum_{i=1}^k 1^{3/2} \right)^{2/3}.
\end{equation*}
Simplifying the terms inside the sums yields:
\begin{equation*}
\sum_{i=1}^k \sigma_i^{2/3} \le \left( \sum_{i=1}^k \sigma_i^2 \right)^{1/3} k^{2/3}.
\end{equation*}
Substituting the Frobenius bound $\sum_{i=1}^k \sigma_i^2 \le V$ and bounding the rank $k \le r$, we obtain:
\begin{equation*}
\|M\|_{S_{2/3}}^{2/3} \le V^{1/3} r^{2/3}.
\end{equation*}
This completes the proof.
\end{proof}

\begin{theorem}[Sparse-Part Schatten Mass Bound]
\label{thm:sparse_schatten_bound}
For any scale $d \in \{1, \dots, \log_2 n\}$ with wavelet width $w_d$ and any tuning parameter $\epsilon > 0$, the Schatten $2/3$ mass of the sparse projection $X_{\mathrm{sparse}}$ is bounded by:
\begin{equation}
\|X_{\mathrm{sparse}}\|_{S_{2/3}}^{2/3} \le \epsilon^{1/3} n w_d^{-1/3}.
\end{equation}
\end{theorem}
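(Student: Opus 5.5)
The bound follows by feeding the two previous estimates on $X_{\mathrm{sparse}}$ into the Rank--Frobenius to Schatten Transfer (Lemma~\ref{lem:rank_schatten_transfer}). Fix the scale $d$ and $\epsilon>0$. If $\mathcal{I}_{\mathrm{sparse}}(d)$ is empty, then $X_{\mathrm{sparse}}=0$ and there is nothing to prove, so assume otherwise. We need two inputs for the lemma.

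First, the Numerically Sparse-Part Frobenius Bound (Theorem~\ref{thm:sparse_frob_bound}) gives the Frobenius input
\[
\|X_{\mathrm{sparse}}\|_F^2\le V:=\epsilon n w_d .
\]
Second, Corollary~\ref{cor:rank_sparse} gives the rank input
\[
\operatorname{rank}(X_{\mathrm{sparse}})\le r:=n/w_d .
\]

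Applying Lemma~\ref{lem:rank_schatten_transfer} with these values of $V$ and $r$ yields
\[
\|X_{\mathrm{sparse}}\|_{S_{2/3}}^{2/3}\le (\epsilon n w_d)^{1/3}\,(n/w_d)^{2/3}.
\]
It then only remains to collect exponents. For $\epsilon$ we get $\epsilon^{1/3}$. For $n$ we get $n^{1/3+2/3}=n$. For $w_d$ we get $w_d^{1/3-2/3}=w_d^{-1/3}$. This is exactly the claimed bound $\epsilon^{1/3} n w_d^{-1/3}$.

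No step here is a real obstacle, since everything reduces to results already proved. The only point to check is that the rank bound from Corollary~\ref{cor:rank_sparse} holds for every $\epsilon$. It does, because it uses only the fact that the row space of $X_{\mathrm{sparse}}$ lies in the range of $P_d$, and this is independent of the partition.
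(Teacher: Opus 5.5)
Your proposal is correct and follows the paper's proof exactly. You take the rank bound $\text{rank}(X_{\mathrm{sparse}})\le n/w_d$ from Corollary~\ref{cor:rank_sparse} and the Frobenius bound $\|X_{\mathrm{sparse}}\|_F^2\le \epsilon n w_d$ from Theorem~\ref{thm:sparse_frob_bound}, feed them into Lemma~\ref{lem:rank_schatten_transfer}, and the exponents then collapse to $\epsilon^{1/3} n w_d^{-1/3}$.
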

\begin{proof}
From Section~\ref{sec:trace-bottleneck}, the sparse projection $X_{\mathrm{sparse}} = A_{\mathrm{sparse}} B_{\mathrm{sparse}} P_d$ satisfies two structural bounds:
\begin{enumerate}
    \item By Corollary~\ref{cor:rank_sparse}, its rank is bounded by the dimension of the right-sided Haar wavelet subspace at scale $d$: $\text{rank}(X_{\mathrm{sparse}}) \le n/w_d$.
    \item By the Numerically Sparse-Part Frobenius Bound (Theorem~\ref{thm:sparse_frob_bound}), its squared Frobenius mass is constrained by the numerical-sparsity threshold: $\|X_{\mathrm{sparse}}\|_F^2 \le \epsilon n w_d$.
\end{enumerate}
Applying the Rank--Frobenius to Schatten Transfer (Lemma~\ref{lem:rank_schatten_transfer}) with bounds $r = n/w_d$ and $V = \epsilon n w_d$ yields:
\begin{equation*}
\|X_{\mathrm{sparse}}\|_{S_{2/3}}^{2/3} \le (\epsilon n w_d)^{1/3} \left( \frac{n}{w_d} \right)^{2/3}.
\end{equation*}
Expanding the exponents inside the product gives:
\begin{equation*}
\|X_{\mathrm{sparse}}\|_{S_{2/3}}^{2/3} \le \epsilon^{1/3} n^{1/3} w_d^{1/3} n^{2/3} w_d^{-2/3} = \epsilon^{1/3} n w_d^{-1/3}.
\end{equation*}
This exact algebraic cancellation bounds the capacity of the numerically sparse vectors.
\end{proof}

\begin{theorem}[Diffuse-Part Mass Lower Bound]
\label{thm:diffuse_mass_lower_bound}
By setting the partition tuning parameter to exactly $\epsilon = c_0^3 / 8 = 1/96$, the diffuse projection $X_{\mathrm{diffuse}}$ satisfies the scale-wise lower bound
\begin{equation}
\|X_{\mathrm{diffuse}}\|_{S_{2/3}}^{2/3} \ge \frac{c_0}{2} n w_d^{-1/3},
\end{equation}
where $c_0 = (1/12)^{1/3}$ is the absolute constant established in Theorem~\ref{thm:Xd_schatten}.
\end{theorem}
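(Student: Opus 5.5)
The plan is to combine the exact Schatten mass of $X_d$ with the $p$-triangle inequality and the sparse-part bound. Since $X_d = X_{\mathrm{sparse}} + X_{\mathrm{diffuse}}$, the $p$-triangle inequality for the Schatten $2/3$ quasi-norm (Theorem~\ref{thm:p_triangle}), applied with $Y = X_{\mathrm{sparse}}$ and $Z = X_{\mathrm{diffuse}}$, gives
\[
\|X_d\|_{S_{2/3}}^{2/3} \le \|X_{\mathrm{sparse}}\|_{S_{2/3}}^{2/3} + \|X_{\mathrm{diffuse}}\|_{S_{2/3}}^{2/3}.
\]
Rearranging, the diffuse mass is at least the total mass minus the sparse mass:
\[
\|X_{\mathrm{diffuse}}\|_{S_{2/3}}^{2/3} \ge \|X_d\|_{S_{2/3}}^{2/3} - \|X_{\mathrm{sparse}}\|_{S_{2/3}}^{2/3}.
\]

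Next I bound the two terms on the right separately. For the first term, Theorem~\ref{thm:Xd_schatten} gives $\|X_d\|_{S_{2/3}}^{2/3} \ge c_0 n w_d^{-1/3}$ with $c_0 = (1/12)^{1/3}$. For the second, Theorem~\ref{thm:sparse_schatten_bound} gives $\|X_{\mathrm{sparse}}\|_{S_{2/3}}^{2/3} \le \epsilon^{1/3} n w_d^{-1/3}$; this holds for every $\epsilon>0$ and every scale $d$. Taking $\epsilon = c_0^3/8$ gives $\epsilon^{1/3} = c_0/2$. Substituting both bounds yields
\[
\|X_{\mathrm{diffuse}}\|_{S_{2/3}}^{2/3} \ge c_0 n w_d^{-1/3} - \tfrac{c_0}{2} n w_d^{-1/3} = \tfrac{c_0}{2} n w_d^{-1/3}.
\]

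It remains to check the numerics: $c_0^3 = 1/12$, so $\epsilon = 1/96$ as stated. The partition in Definition~\ref{def:sparse_diffuse_partition} is well defined because $\Delta>0$. There is no real obstacle here, since every substantive ingredient is already in place. The only points needing care are that the decomposition $X_d = X_{\mathrm{sparse}} + X_{\mathrm{diffuse}}$ is exact, which follows from $Q=AB$ and the column/row split, and that the quasi-norm inequality is used in its $2/3$-power form so that subtraction is legitimate.
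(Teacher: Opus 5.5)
Your proposal is correct and matches the paper's proof essentially step for step. Both arguments apply the $p$-triangle inequality to $X_d = X_{\mathrm{sparse}} + X_{\mathrm{diffuse}}$, take the lower bound $c_0 n w_d^{-1/3}$ from Theorem~\ref{thm:Xd_schatten} and the sparse-part bound $\epsilon^{1/3} n w_d^{-1/3}$ from Theorem~\ref{thm:sparse_schatten_bound}, and choose $\epsilon = c_0^3/8$ so that $\epsilon^{1/3} = c_0/2$.
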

\begin{proof}
The direct right-sided Haar projection $X_d$ decomposes as the sum of the sparse and diffuse projections: $X_d = X_{\mathrm{sparse}} + X_{\mathrm{diffuse}}$. By the $p$-triangle inequality for the Schatten $2/3$ quasi-norm (Theorem~\ref{thm:p_triangle}), the mass is subadditive for $p=2/3$:
\begin{equation*}
\|X_d\|_{S_{2/3}}^{2/3} \le \|X_{\mathrm{sparse}}\|_{S_{2/3}}^{2/3} + \|X_{\mathrm{diffuse}}\|_{S_{2/3}}^{2/3}.
\end{equation*}
From Theorem~\ref{thm:Xd_schatten}, the target projection $X_d$ contains exactly $n/w_d$ singular values of size $\Theta(w_d)$, which dictates that its total Schatten $2/3$ mass is bounded from below by:
\begin{equation*}
\|X_d\|_{S_{2/3}}^{2/3} \ge c_0 n w_d^{-1/3}.
\end{equation*}
Substituting the upper bound for the sparse-part mass from Theorem~\ref{thm:sparse_schatten_bound} into the inequality yields:
\begin{equation*}
c_0 n w_d^{-1/3} \le \epsilon^{1/3} n w_d^{-1/3} + \|X_{\mathrm{diffuse}}\|_{S_{2/3}}^{2/3}.
\end{equation*}
We now tune the partition threshold by setting $\epsilon = c_0^3 / 8$. This choice implies $\epsilon^{1/3} = c_0 / 2$. Substituting this parameter into the inequality yields:
\begin{equation*}
c_0 n w_d^{-1/3} \le \frac{c_0}{2} n w_d^{-1/3} + \|X_{\mathrm{diffuse}}\|_{S_{2/3}}^{2/3}.
\end{equation*}
Subtracting $\frac{c_0}{2} n w_d^{-1/3}$ from both sides yields the remaining lower bound:
\begin{equation*}
\|X_{\mathrm{diffuse}}\|_{S_{2/3}}^{2/3} \ge \frac{c_0}{2} n w_d^{-1/3}.
\end{equation*}
Noting that $c_0^3 = 1/12$, the explicit value of our tuning parameter is exactly $\epsilon = \frac{1}{12} \cdot \frac{1}{8} = \frac{1}{96}$.
\end{proof}

\begin{remark}
The preceding estimates show that the diffuse projection carries Schatten
$2/3$ mass at least $(c_0/2)n w_d^{-1/3}$ at every scale.  In
Section~\ref{sec:haar-projection-estimate}, we bound this contribution in terms of
the numerical sparsity of the corresponding rows of $B$.
\end{remark}

\subsection{Haar Projection Estimate for Diffuse Rows}
\label{sec:haar-projection-estimate}
In the previous subsections, we analyzed the contribution of $Q$ at each dyadic scale, partitioned the factorization factors into numerically sparse and diffuse components, and proved via the Rank--Frobenius to Schatten Transfer that the diffuse vectors must carry a macroscopic portion of the Schatten $2/3$ mass at each scale. We now bound how effectively any row of $B$ can synthesize Haar wavelets at a fixed scale in terms of its numerical sparsity, bounding its fractional capacity contribution.

\begin{lemma}[Haar Projection Estimate]
\label{lem:haar_projection_estimate}
Let $b \in \mathbb{R}^n$ be any vector with numerical sparsity $S(b) = \|b\|_1^2/\|b\|_2^2$ (with $S(b)=1$ if $b=0$). The $\ell_2$ energy of its projection $P_d b$ onto the Haar wavelet subspace at scale $d$ (with wavelet support width $w_d = 2^d$) is bounded by:
\begin{equation}
    \|P_d b\|_2^2 \le \frac{\|b\|_1^2}{\max(w_d, S(b))}.
\end{equation}
\end{lemma}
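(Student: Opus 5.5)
The bound splits into two separate inequalities, since $\|b\|_1^2/\max(w_d,S(b)) = \min(\|b\|_1^2/w_d,\ \|b\|_1^2/S(b))$. It therefore suffices to show
\[
\|P_d b\|_2^2 \le \|b\|_2^2 = \frac{\|b\|_1^2}{S(b)}
\qquad\text{and}\qquad
\|P_d b\|_2^2 \le \frac{\|b\|_1^2}{w_d}.
\]
If $b=0$ both sides vanish, so we may assume $b\neq 0$.

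The first inequality holds because $P_d$ is an orthogonal projection, so $\|P_d b\|_2\le\|b\|_2$. The identity $\|b\|_2^2=\|b\|_1^2/S(b)$ is just the definition of numerical sparsity.

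For the second inequality, I would expand in the orthonormal wavelets of Definition~\ref{def:haar_basis}:
\[
\|P_d b\|_2^2=\sum_{k=1}^{n/w_d}\langle h_{d,k},b\rangle^2 .
\]
Since $h_{d,k}$ is supported on $I_{d,k}$ with entries of magnitude $w_d^{-1/2}$, we have $|\langle h_{d,k},b\rangle|\le w_d^{-1/2}\beta_k$, where $\beta_k=\sum_{j\in I_{d,k}}|b_j|$. Thus
\[
\|P_d b\|_2^2\le \frac{1}{w_d}\sum_k\beta_k^2\le\frac{1}{w_d}\Bigl(\sum_k\beta_k\Bigr)^2=\frac{\|b\|_1^2}{w_d}.
\]
Here the middle step uses that the $\beta_k$ are nonnegative, and the final equality uses that the intervals $I_{d,k}$ partition $\{1,\dots,n\}$.

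Taking the minimum of the two bounds gives the lemma. There is no real obstacle here. The only point needing care is the inequality $\sum\beta_k^2\le(\sum\beta_k)^2$, which relies on the disjointness of the supports at a fixed scale. That disjointness is what keeps the $\ell_1$ mass from being counted more than once.
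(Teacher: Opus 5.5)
Your proof is correct and matches the paper's argument step for step. The paper also uses the bound $\|P_d b\|_2^2\le\|b\|_2^2=\|b\|_1^2/S(b)$ from orthogonality of $P_d$, the spatial bound via $|\langle h_{d,k},b\rangle|\le w_d^{-1/2}\beta_k$ and $\sum_k\beta_k^2\le(\sum_k\beta_k)^2$, and then takes the minimum of the two. One small precision: the inequality $\sum_k\beta_k^2\le(\sum_k\beta_k)^2$ holds for any nonnegative numbers, and disjointness of the intervals $I_{d,k}$ is actually used in the step $\sum_k\beta_k\le\|b\|_1$.
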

\begin{proof}
If $b=0$, the inequality holds trivially. Assume $b \neq 0$. We derive two independent bounds on the projection energy $\|P_d b\|_2^2$.

First, we bound the energy based on the wavelet support size $w_d$. The subspace at scale $d$ is spanned by the mutually orthogonal Haar wavelets $\{h_{d,j}\}_{j=1}^{n/w_d}$, each supported on a disjoint interval $I_{d,j}$ of length $w_d$. Thus, the projection energy is exactly the sum of the squared inner products:
\begin{equation}
    \|P_d b\|_2^2 = \sum_{j=1}^{n/w_d} \langle b, h_{d,j} \rangle^2.
\end{equation}
Because the non-zero entries of $h_{d,j}$ have magnitude $w_d^{-1/2}$, its $\ell_\infty$ norm is $\|h_{d,j}\|_\infty = w_d^{-1/2}$. By H\"older's inequality, we bound the magnitude of the inner product by the local $\ell_1$ norm of $b$ over the support interval $I_{d,j}$:
\begin{equation}
    |\langle b, h_{d,j} \rangle| = \left| \sum_{i \in I_{d,j}} b_i h_{d,j}(i) \right| \le \sum_{i \in I_{d,j}} |b_i| \|h_{d,j}\|_\infty = \frac{1}{\sqrt{w_d}} \sum_{i \in I_{d,j}} |b_i|.
\end{equation}
Squaring this inequality yields:
\begin{equation}
    \langle b, h_{d,j} \rangle^2 \le \frac{1}{w_d} \left( \sum_{i \in I_{d,j}} |b_i| \right)^2.
\end{equation}
Summing over all disjoint wavelets at scale $d$:
\begin{equation}
    \|P_d b\|_2^2 \le \frac{1}{w_d} \sum_{j=1}^{n/w_d} \left( \sum_{i \in I_{d,j}} |b_i| \right)^2.
\end{equation}
Since the intervals $I_{d,j}$ are mutually disjoint, we can apply the elementary inequality $\sum_j x_j^2 \le (\sum_j x_j)^2$ for non-negative sequences, by setting $x_j = \sum_{i \in I_{d,j}} |b_i|$:
\begin{equation}
    \sum_{j=1}^{n/w_d} \left( \sum_{i \in I_{d,j}} |b_i| \right)^2 \le \left( \sum_{j=1}^{n/w_d} \sum_{i \in I_{d,j}} |b_i| \right)^2 \le \left( \sum_{i=1}^n |b_i| \right)^2 = \|b\|_1^2.
\end{equation}
Therefore, we obtain the spatial bound:
\begin{equation}
\label{eq:bound_wd}
    \|P_d b\|_2^2 \le \frac{\|b\|_1^2}{w_d}.
\end{equation}

Second, we bound the energy based on the numerical sparsity $S(b)$. Since $P_d$ is an orthogonal projection, it cannot increase the total $\ell_2$ energy of $b$. Thus, we have the immediate spectral bound:
\begin{equation}
\label{eq:bound_S}
    \|P_d b\|_2^2 \le \|b\|_2^2 = \frac{\|b\|_1^2}{S(b)}.
\end{equation}

Since $\|P_d b\|_2^2$ must simultaneously satisfy both the spectral bound \eqref{eq:bound_S} and the spatial bound \eqref{eq:bound_wd}, we conclude:
\begin{equation}
    \|P_d b\|_2^2 \le \min\left( \frac{\|b\|_1^2}{S(b)}, \frac{\|b\|_1^2}{w_d} \right) = \frac{\|b\|_1^2}{\max(w_d, S(b))}.
\end{equation}
\end{proof}

\begin{corollary}[Fractional Capacity Bound for a Single Row]
\label{cor:row_projection}
Applying this estimate to the rows of the factorization matrix $B$, let $b_k = (B_{k,:})^T$ be the $k$-th row of $B$, with $\ell_1$ norm $L_k = \|b_k\|_1$ and numerical sparsity $S_k = L_k^2 / \|b_k\|_2^2$. The fractional $\ell_2^{2/3}$ capacity provided by this row at scale $d$ satisfies:
\begin{equation}
    \|P_d b_k\|_2^{2/3} \le L_k^{2/3} \max(w_d, S_k)^{-1/3} = L_k^{2/3} w_d^{-1/3} \min\left(1, \left(\frac{w_d}{S_k}\right)^{1/3}\right).
\end{equation}
\end{corollary}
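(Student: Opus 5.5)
The corollary follows by specializing Lemma~\ref{lem:haar_projection_estimate} to $b=b_k$ and raising both sides to the power $1/3$, followed by a routine rewriting of the right-hand side. I will set $b=b_k=(B_{k,:})^T$, so that $\|b\|_1=L_k$ and $S(b)=S_k$. This identification is consistent in the degenerate case $b_k=0$, since both Definition~\ref{def:numerical_sparsity} and the lemma use the convention $S=1$ there. The lemma then gives
\[
\|P_d b_k\|_2^2 \le \frac{L_k^2}{\max(w_d,S_k)}.
\]

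Both sides are nonnegative, and $t\mapsto t^{1/3}$ is monotone increasing on $[0,\infty)$. Applying it yields
\[
\|P_d b_k\|_2^{2/3}\le L_k^{2/3}\max(w_d,S_k)^{-1/3},
\]
which is the first inequality in the statement. When $b_k=0$ the left side is $0$ and the bound is trivial.

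For the equality, I will factor out $w_d$:
\[
\max(w_d,S_k)=w_d\max(1,S_k/w_d),
\]
and hence
\[
\max(w_d,S_k)^{-1/3}=w_d^{-1/3}\min\bigl(1,(w_d/S_k)^{1/3}\bigr).
\]
This uses only $S_k\ge 1>0$ and $w_d>0$. The rewriting is valid because $x\mapsto x^{-1/3}$ is decreasing on $(0,\infty)$, so it turns a maximum into a minimum of the reciprocals.

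I do not expect a real obstacle. The only points needing care are positivity, so that $S_k$ and $w_d$ can be divided by safely, and the handling of the zero row, both covered above.
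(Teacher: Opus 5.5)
Your proposal is correct and is essentially the paper's argument: the paper gives no separate proof, presenting the corollary as a direct specialization of Lemma~\ref{lem:haar_projection_estimate} to $b=b_k$, followed by taking cube roots and rewriting $\max(w_d,S_k)^{-1/3}=w_d^{-1/3}\min\bigl(1,(w_d/S_k)^{1/3}\bigr)$. Your handling of the zero row through the shared convention $S=1$, and your check that $w_d$ and $S_k$ are positive, cover the only points needing care.
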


We now evaluate the total fractional capacity that the diffuse components can contribute to the scale projection $X_d = Q P_d$. The index set $\mathcal{I}_{\mathrm{diffuse}}(d)$ consists of the diffuse vectors at scale $d$, namely those with numerical sparsity $S_k > \epsilon w_d / \Delta$ (where $\Delta = M_A^2 M_B^2$ and $\epsilon = c_0^3 / 8$, as established in Section~\ref{sec:schatten-transfer}). The diffuse part of the projection is $X_{\mathrm{diffuse}} = \sum_{k \in \mathcal{I}_{\mathrm{diffuse}}(d)} A_{:, k} (P_d b_k)^T$.

\begin{theorem}[Diffuse-Row Fractional Cost Bound]
\label{thm:diffuse_fractional_bound}
The Schatten $2/3$ mass of the diffuse projection at scale $d$ is bounded from above by the numerical sparsity of its constituent vectors:
\begin{equation}
\|X_{\mathrm{diffuse}}\|_{S_{2/3}}^{2/3} \le w_d^{-1/3} \sum_{k \in \mathcal{I}_{\mathrm{diffuse}}(d)} N_k^{1/3} L_k^{2/3} \min\left(1, \left(\frac{w_d}{S_k}\right)^{1/3}\right).
\end{equation}
\end{theorem}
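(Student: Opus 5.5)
We will write $X_{\mathrm{diffuse}}$ as a sum of rank-one matrices and apply the $p$-triangle inequality term by term. Since $X_{\mathrm{diffuse}} = A_{\mathrm{diffuse}} B_{\mathrm{diffuse}} P_d$ and $P_d$ is symmetric, we have
\[
X_{\mathrm{diffuse}} = \sum_{k \in \mathcal{I}_{\mathrm{diffuse}}(d)} A_{:,k} (P_d b_k)^T .
\]
Iterating Theorem~\ref{thm:p_triangle} over the finitely many summands (a straightforward induction on the number of terms) gives
\[
\|X_{\mathrm{diffuse}}\|_{S_{2/3}}^{2/3} \le \sum_{k \in \mathcal{I}_{\mathrm{diffuse}}(d)} \bigl\|A_{:,k}(P_d b_k)^T\bigr\|_{S_{2/3}}^{2/3}.
\]

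Next we evaluate each summand exactly. A rank-one matrix $uv^T$ has at most one nonzero singular value, namely $\|u\|_2\|v\|_2$. Hence its Schatten $2/3$ mass is $(\|u\|_2\|v\|_2)^{2/3}$, and this also covers the zero case. For $u = A_{:,k}$ and $v = P_d b_k$, the $k$-th term therefore equals
\[
\|A_{:,k}\|_2^{2/3}\,\|P_d b_k\|_2^{2/3} = N_k^{1/3}\,\|P_d b_k\|_2^{2/3},
\]
using $N_k = \|A_{:,k}\|_2^2$ from Definition~\ref{def:volume_metrics}.

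Finally, we insert Corollary~\ref{cor:row_projection}, which bounds $\|P_d b_k\|_2^{2/3}$ by $L_k^{2/3} w_d^{-1/3}\min\bigl(1,(w_d/S_k)^{1/3}\bigr)$. Every factor is nonnegative, so the bound may be applied termwise. Pulling out the common factor $w_d^{-1/3}$ yields the stated inequality.

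None of these steps is really an obstacle. The only points that need care are the finite iteration of the quasi-triangle inequality and the convention $S_k=1$ for rows with $b_k=0$; such terms contribute zero on both sides.
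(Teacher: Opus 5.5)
Your proposal is correct and follows the paper's proof step for step. You expand $X_{\mathrm{diffuse}}$ into the rank-one terms $A_{:,k}(P_d b_k)^T$, apply the $2/3$-triangle inequality of Theorem~\ref{thm:p_triangle}, evaluate each term's mass exactly as $N_k^{1/3}\|P_d b_k\|_2^{2/3}$, and then insert Corollary~\ref{cor:row_projection}.
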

\begin{proof}
As established in Theorem~\ref{thm:p_triangle}, the Schatten $2/3$ quasi-norm satisfies the $p$-triangle inequality for $p=2/3$. Applying this to the rank-1 decomposition of $X_{\mathrm{diffuse}}$, we bound the quasi-norm of the sum by the sum of the quasi-norms of its components:
\begin{equation}
\|X_{\mathrm{diffuse}}\|_{S_{2/3}}^{2/3} = \left\| \sum_{k \in \mathcal{I}_{\mathrm{diffuse}}(d)} A_{:, k} (P_d b_k)^T \right\|_{S_{2/3}}^{2/3} \le \sum_{k \in \mathcal{I}_{\mathrm{diffuse}}(d)} \left\| A_{:, k} (P_d b_k)^T \right\|_{S_{2/3}}^{2/3}.
\end{equation}
Because each matrix $A_{:, k} (P_d b_k)^T$ has rank at most $1$, its sole non-zero singular value is the product of the $\ell_2$ norms of its defining vectors. Thus, its Schatten $2/3$ mass is exactly:
\begin{equation}
\left\| A_{:, k} (P_d b_k)^T \right\|_{S_{2/3}}^{2/3} = \|A_{:, k}\|_2^{2/3} \|P_d b_k\|_2^{2/3} = N_k^{1/3} \|P_d b_k\|_2^{2/3}.
\end{equation}
Substituting the bound on $\|P_d b_k\|_2^{2/3}$ from Corollary \ref{cor:row_projection} yields:
\begin{equation}
\left\| A_{:, k} (P_d b_k)^T \right\|_{S_{2/3}}^{2/3} \le N_k^{1/3} L_k^{2/3} w_d^{-1/3} \min\left(1, \left(\frac{w_d}{S_k}\right)^{1/3}\right).
\end{equation}
Summing this upper bound over all $k \in \mathcal{I}_{\mathrm{diffuse}}(d)$ proves the theorem.
\end{proof}

\begin{theorem}[Scale-Wise Diffuse-Row Cost Bound]
\label{thm:scale_wise_diffuse_bound}
For any scale $d \in \{1, \dots, \log_2 n\}$ with spatial width $w_d$, synthesizing the diffuse mass of the target projection $X_d$ requires the fractional capacity of the diffuse factorization vectors to satisfy:
\begin{equation}
\sum_{k \in \mathcal{I}_{\mathrm{diffuse}}(d)} N_k^{1/3} L_k^{2/3} \min\left(1, \left(\frac{w_d}{S_k}\right)^{1/3}\right) \ge \frac{c_0}{2} n,
\end{equation}
where $c_0 = (1/12)^{1/3}$ is the universal absolute constant established in Theorem~\ref{thm:Xd_schatten}.
\end{theorem}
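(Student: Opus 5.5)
This follows by chaining the two bounds on the diffuse projection that are already available, so the plan is short.

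First, fix a scale $d$ and set $\epsilon = c_0^3/8 = 1/96$ in the sparse--diffuse partition of Definition~\ref{def:sparse_diffuse_partition}. This is the same value used in Theorem~\ref{thm:diffuse_mass_lower_bound} and in the definition of $\mathcal{I}_{\mathrm{diffuse}}(d)$ preceding Theorem~\ref{thm:diffuse_fractional_bound}. With this choice, Theorem~\ref{thm:diffuse_mass_lower_bound} gives the lower bound
\[
\|X_{\mathrm{diffuse}}\|_{S_{2/3}}^{2/3} \ge \frac{c_0}{2}\, n\, w_d^{-1/3}.
\]
This lower bound rests on three earlier results: the $p$-triangle inequality of Theorem~\ref{thm:p_triangle}, the exact Schatten mass of $X_d$ from Theorem~\ref{thm:Xd_schatten}, and the sparse-part cap from Theorem~\ref{thm:sparse_schatten_bound}.

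Second, for the same partition and the same index set, Theorem~\ref{thm:diffuse_fractional_bound} gives the upper bound
\[
\|X_{\mathrm{diffuse}}\|_{S_{2/3}}^{2/3} \le w_d^{-1/3}\sum_{k\in\mathcal{I}_{\mathrm{diffuse}}(d)} N_k^{1/3}L_k^{2/3}\min\!\left(1,(w_d/S_k)^{1/3}\right).
\]
Combining the two inequalities and multiplying through by $w_d^{1/3}>0$ cancels the scale factor $w_d^{-1/3}$ on both sides. What remains is exactly the claimed inequality, with right-hand side $\frac{c_0}{2}n$.

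There is no real obstacle here. The only point to check is that both theorems refer to the same matrix $X_{\mathrm{diffuse}} = A_{\mathrm{diffuse}}B_{\mathrm{diffuse}}P_d$, which they do because both use the threshold $\epsilon w_d/\Delta$ with $\epsilon = 1/96$. One should also note the degenerate cases. Since $\Delta>0$, the partition is well defined. If $\mathcal{I}_{\mathrm{diffuse}}(d)$ were empty, then $X_{\mathrm{diffuse}}=0$, which would contradict the positive lower bound from Theorem~\ref{thm:diffuse_mass_lower_bound}. So that case cannot occur, and the argument is consistent.
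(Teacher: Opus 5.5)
Your proposal is correct and matches the paper's proof: you combine the diffuse-part lower bound of Theorem~\ref{thm:diffuse_mass_lower_bound} (with $\epsilon=1/96$) with the upper bound of Theorem~\ref{thm:diffuse_fractional_bound} over the same index set $\mathcal{I}_{\mathrm{diffuse}}(d)$, then cancel the common factor $w_d^{-1/3}$. Your remark that $\mathcal{I}_{\mathrm{diffuse}}(d)$ cannot be empty is a harmless extra consistency check that the argument does not need.
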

\begin{proof}
By the Rank--Frobenius to Schatten Transfer (Section~\ref{sec:schatten-transfer}), we proved that the numerically sparse vectors cannot carry the full required mass, forcing the diffuse component to satisfy:
\begin{equation}
\|X_{\mathrm{diffuse}}\|_{S_{2/3}}^{2/3} \ge \frac{c_0}{2} n w_d^{-1/3}.
\end{equation}
Combining this strict geometric lower bound with the factorization upper bound from Theorem \ref{thm:diffuse_fractional_bound}, we obtain:
\begin{equation}
\frac{c_0}{2} n w_d^{-1/3} \le w_d^{-1/3} \sum_{k \in \mathcal{I}_{\mathrm{diffuse}}(d)} N_k^{1/3} L_k^{2/3} \min\left(1, \left(\frac{w_d}{S_k}\right)^{1/3}\right).
\end{equation}
Because the spatial support width satisfies $w_d \ge 2 > 0$, we can cancel the common $w_d^{-1/3}$ factor from both sides, yielding the final scale-wise inequality.
\end{proof}

\begin{remark}
For a fixed row $k$, the scale-$d$ multiplier equals $1$ when $w_d\ge S_k$
and equals $(w_d/S_k)^{1/3}$ when $w_d<S_k$.  The row appears in the
diffuse sum only when $w_d<S_k\Delta/\epsilon$; otherwise it belongs to the
numerically sparse part controlled in Sections~\ref{sec:trace-bottleneck}
and~\ref{sec:schatten-transfer}.  These facts bound the total contribution of
each row across scales, as shown next.
\end{remark}

\subsection{Log-Log Aggregation and Final Lower Bound}
\label{sec:aggregation}
In Section~\ref{sec:haar-projection-estimate}, we established the Scale-Wise Diffuse-Row Cost Bound (Theorem~\ref{thm:scale_wise_diffuse_bound}), which demonstrates that synthesizing the Haar wavelets at any single spatial scale $d$ imposes a strict macroscopic lower bound on the fractional capacity of the diffuse vectors. We now aggregate this capacity demand across all $\log_2 n$ dyadic scales. The pivotal insight of this aggregation is that the dynamic partition established in Section~\ref{sec:trace-bottleneck} truncates the required summation for any given vector. A factor only contributes to the diffuse subset at scales where its numerical sparsity is sufficiently large ($w_d < S_k \Delta / \epsilon$). This limits the number of coarse scales (where $w_d \ge S_k$) at which the vector can contribute. This geometric truncation forces a lower bound on the aggregate fractional rank-one cost, averting the need for heavy component deflation and closing the factorization barrier.

\begin{theorem}[Per-Row Scale Multiplier Bound]
\label{thm:per_row_scale_multiplier}
For any vector $k \in \{1, \dots, m\}$ with numerical sparsity $S_k$, define the active scales as the subset of dyadic scales $d \in \{1, \dots, \log_2 n\}$ where $k$ is classified as diffuse:
\begin{equation}
\mathcal{D}_k = \left\{ d \in \{1, \dots, \log_2 n\} \;\middle|\; w_d < \frac{S_k \Delta}{\epsilon} \right\}.
\end{equation}
Assume the factorization capacity satisfies $\Delta = M_A^2 M_B^2 \le (\log_2 n)^3$. Then the sum of the fractional capacity multipliers over the active scales is bounded by:
\begin{equation}
\Sigma_k = \sum_{d \in \mathcal{D}_k} \min\left(1, \left(\frac{w_d}{S_k}\right)^{1/3}\right) \le 3 \log_2 \log_2 n + 13.
\end{equation}
\end{theorem}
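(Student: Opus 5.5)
We split the sum $\Sigma_k$ according to whether the multiplier is saturated. Write $T_k = S_k\Delta/\epsilon$ for the diffuse cutoff, so $\mathcal{D}_k=\{d : 2^d < T_k\}$. Each term is at most $1$, and since $w_d=2^d$ the scales form a geometric progression. We partition $\mathcal{D}_k$ into the \emph{fine} scales $\{d\in\mathcal{D}_k : 2^d < S_k\}$ and the \emph{coarse} scales $\{d\in\mathcal{D}_k : S_k\le 2^d < T_k\}$, and bound each part separately.

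\textbf{Fine scales.} Here the multiplier is $(2^d/S_k)^{1/3}$. Enumerate these scales from the largest down. The largest has $2^d/S_k<1$, and each subsequent one is smaller by a factor of $2^{-1/3}$. The fine contribution is therefore at most the geometric series
\[
\sum_{j\ge 0}2^{-j/3}=\frac{1}{1-2^{-1/3}}\approx 4.85 .
\]
This bound holds uniformly, with no dependence on $S_k$ or $n$.

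\textbf{Coarse scales.} Here each multiplier equals $1$, so the contribution is the number of integers $d$ with $S_k\le 2^d<T_k$. This count is at most $\log_2(T_k/S_k)+1=\log_2(\Delta/\epsilon)+1$. With $\epsilon=1/96$ and the hypothesis $\Delta\le(\log_2 n)^3$, this becomes
\[
3\log_2\log_2 n+\log_2 96+1\le 3\log_2\log_2 n+7.6 .
\]
If $S_k$ is so large that $T_k$ or $S_k$ exceeds $n$, the range $d\le\log_2 n$ only truncates the sets, so the bounds remain valid. If $\mathcal{D}_k$ is empty, the sum is $0$. Lemma~\ref{lem:nontrivial_cost} gives $\Delta\ge 1$, so $\log_2(\Delta/\epsilon)$ is positive and no sign issue arises.

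\textbf{Combining.} Adding the two parts gives $\Sigma_k\le 3\log_2\log_2 n + 7.6 + 4.85 < 3\log_2\log_2 n+13$. The only delicate step is the off-by-one count on the coarse range together with the constant of the geometric series. Both fit within the slack of $13$: the total constant is about $12.5$, and even bounding the coarse count crudely by $\log_2(\Delta/\epsilon)+1$ stays under the limit. No step is a genuine obstacle, since this is pure bookkeeping once the cutoff $T_k$ is substituted from the definition of $\mathcal{D}_k$.
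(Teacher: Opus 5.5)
Your proposal is correct and follows the paper's proof essentially line for line: the same split of $\mathcal{D}_k$ at $w_d = S_k$, the coarse-scale count $\log_2(\Delta/\epsilon)+1 \le 3\log_2\log_2 n + \log_2 96 + 1$ from the half-open interval $[\log_2 S_k,\ \log_2 S_k + \log_2(\Delta/\epsilon))$, and the geometric-series bound $1/(1-2^{-1/3}) < 5$ on the fine scales. Your side remarks also match the paper: truncation at $d \le \log_2 n$ only drops nonnegative terms, and $\Delta \ge 1$ (Lemma~\ref{lem:nontrivial_cost}) gives $\Delta/\epsilon \ge 96 > 1$.
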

\begin{proof}
Throughout this proof we use the fixed tuning parameter $\epsilon=1/96$ from Theorem~\ref{thm:diffuse_mass_lower_bound}. By Lemma~\ref{lem:nontrivial_cost}, $\Delta\ge 1$, and hence $\Delta/\epsilon\ge 96>1$.
We split the sum into two disjoint sets of scales: coarse scales and fine scales relative to the numerical sparsity $S_k$.

\noindent \textbf{1. Coarse Scales:} Let $\mathcal{D}_k^{\text{coarse}} = \{ d \in \mathcal{D}_k \mid w_d \ge S_k \}$.
For these scales, $w_d / S_k \ge 1$, so the minimum evaluates to $1$. The number of such scales is exactly the size of this set. Since $d$ is an integer and $w_d = 2^d$, the condition translates to:
\begin{equation}
S_k \le 2^d < \frac{S_k \Delta}{\epsilon}.
\end{equation}
Taking the base-2 logarithm, we obtain:
\begin{equation}
\log_2 S_k \le d < \log_2 S_k + \log_2\left(\frac{\Delta}{\epsilon}\right).
\end{equation}
The number of integers in any half-open interval of length $L$ is bounded by $\lceil L \rceil \le L + 1$. Thus, the number of coarse scales is at most $\log_2(\Delta/\epsilon) + 1$. Under the assumption that $\Delta \le (\log_2 n)^3$, we have:
\begin{equation}
\log_2\left(\frac{\Delta}{\epsilon}\right) + 1 \le \log_2\left((\log_2 n)^3\right) + \log_2\left(\frac{1}{\epsilon}\right) + 1 = 3 \log_2 \log_2 n + \log_2\left(\frac{1}{\epsilon}\right) + 1.
\end{equation}
Recall from Theorem~\ref{thm:diffuse_mass_lower_bound} that $\epsilon = 1/96$. Thus $\log_2(1/\epsilon) = \log_2 96 \approx 6.58 < 7$. Therefore, the sum over coarse scales is bounded by $3 \log_2 \log_2 n + 8$.

\noindent \textbf{2. Fine Scales:} Let $\mathcal{D}_k^{\text{fine}} = \{ d \in \mathcal{D}_k \mid w_d < S_k \}$.
For these scales, $w_d / S_k < 1$, and the minimum evaluates to $(w_d / S_k)^{1/3} = 2^{(d - \log_2 S_k)/3}$. Let $d_{\max} = \lfloor \log_2 S_k \rfloor$. All fine scales satisfy $d \le d_{\max}$. We can extend the sum down to $d \to -\infty$ to obtain a strict upper bound via an infinite geometric series:
\begin{equation}
\sum_{d \in \mathcal{D}_k^{\text{fine}}} \left(\frac{2^d}{S_k}\right)^{1/3} \le \sum_{j=0}^{\infty} \left( \frac{2^{d_{\max} - j}}{S_k} \right)^{1/3}.
\end{equation}
Since $2^{d_{\max}} \le S_k$, the leading term is at most $1$. Factoring this out, we get:
\begin{equation}
\sum_{j=0}^{\infty} (2^{-1/3})^j = \frac{1}{1 - 2^{-1/3}} \approx 4.85 < 5.
\end{equation}
Adding the bounds from both regimes, the total capacity multiplier is bounded by:
\begin{equation}
\Sigma_k \le 3 \log_2 \log_2 n + 8 + 5 = 3 \log_2 \log_2 n + 13.
\end{equation}
This completes the proof.
\end{proof}

\begin{theorem}[Aggregate Fractional Rank-One Lower Bound]
\label{thm:aggregate_fractional_bound}
If the factorization capacity satisfies $\Delta = M_A^2 M_B^2 \le (\log_2 n)^3$, then the fractional rank-one cost is bounded below by:
\begin{equation}
\sum_{k=1}^m N_k^{1/3} L_k^{2/3} \ge \Omega\left( n \frac{\log_2 n}{\log_2 \log_2 n} \right).
\end{equation}
\end{theorem}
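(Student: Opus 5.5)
Sum the scale-wise inequality of Theorem~\ref{thm:scale_wise_diffuse_bound} over all dyadic scales $d\in\{1,\dots,\log_2 n\}$ and then exchange the order of summation. The left-hand sides add to
\[
\sum_{d=1}^{\log_2 n}\ \sum_{k\in\mathcal{I}_{\mathrm{diffuse}}(d)} N_k^{1/3}L_k^{2/3}\min\left(1,\left(\frac{w_d}{S_k}\right)^{1/3}\right),
\]
and the right-hand sides add to $\frac{c_0}{2}\,n\log_2 n$. Fix the tuning parameter $\epsilon=1/96$ from Theorem~\ref{thm:diffuse_mass_lower_bound}. With this choice, the condition $k\in\mathcal{I}_{\mathrm{diffuse}}(d)$ is by Definition~\ref{def:sparse_diffuse_partition} exactly the condition $w_d< S_k\Delta/\epsilon$, which is exactly $d\in\mathcal{D}_k$. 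Swapping the sums therefore rewrites the double sum as
\[
\sum_{k=1}^m N_k^{1/3}L_k^{2/3}\,\Sigma_k ,
\]
where $\Sigma_k$ is the per-row multiplier of Theorem~\ref{thm:per_row_scale_multiplier}. All terms are nonnegative and the sums are finite, so the interchange is legitimate. Rows with $b_k=0$ have $L_k=0$ and contribute nothing, so the convention $S_k=1$ for them does not matter.

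Next, the hypothesis $\Delta\le(\log_2 n)^3$ is exactly what Theorem~\ref{thm:per_row_scale_multiplier} needs, so it gives $\Sigma_k\le 3\log_2\log_2 n+13$ uniformly in $k$. Pulling this uniform bound out of the sum gives
\[
\frac{c_0}{2}\,n\log_2 n\;\le\;\bigl(3\log_2\log_2 n+13\bigr)\sum_{k=1}^m N_k^{1/3}L_k^{2/3}.
\]
For $n$ large enough, $3\log_2\log_2 n+13\le 4\log_2\log_2 n$. Dividing through then yields
\[
\sum_{k=1}^m N_k^{1/3}L_k^{2/3}\;\ge\;\frac{c_0}{8}\,\frac{n\log_2 n}{\log_2\log_2 n},
\]
which is the claimed $\Omega$ bound.

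The only step needing care is the exact match between the diffuse classification and the active-scale set $\mathcal{D}_k$. Both use the same strict threshold, and $\epsilon$ and $\Delta$ are fixed independently of $d$, so the sum over active scales is precisely the coefficient that the swapped double sum produces. Beyond that, the argument is a rearrangement of nonnegative sums plus one use of the earlier per-row bound. The reduction to $n$ a power of two from Definition~\ref{def:haar_basis} changes the constants only by a factor of at most $2$.
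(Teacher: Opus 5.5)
Your proposal is correct and follows the paper's proof essentially verbatim: sum Theorem~\ref{thm:scale_wise_diffuse_bound} over all scales, use the exact equivalence $k\in\mathcal{I}_{\mathrm{diffuse}}(d)\iff d\in\mathcal{D}_k$ to rewrite the nonnegative double sum as $\sum_k N_k^{1/3}L_k^{2/3}\Sigma_k$, bound $\Sigma_k$ uniformly via Theorem~\ref{thm:per_row_scale_multiplier}, and divide. The only difference is cosmetic: you absorb $3\log_2\log_2 n+13$ into $4\log_2\log_2 n$ for large $n$, whereas the paper keeps the explicit denominator $6\log_2\log_2 n+26$.
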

\begin{proof}
We sum the Scale-Wise Diffuse-Row Cost Bound (Theorem~\ref{thm:scale_wise_diffuse_bound}) over all spatial scales $d \in \{1, \dots, \log_2 n\}$. By definition, the index set $\mathcal{I}_{\mathrm{diffuse}}(d)$ contains exactly those vectors $k$ where $S_k > \epsilon w_d / \Delta$, or equivalently, $d \in \mathcal{D}_k$. Summing over all $\log_2 n$ scales yields:
\begin{equation}
\sum_{d=1}^{\log_2 n} \sum_{k \in \mathcal{I}_{\mathrm{diffuse}}(d)} N_k^{1/3} L_k^{2/3} \min\left(1, \left(\frac{w_d}{S_k}\right)^{1/3}\right) \ge \sum_{d=1}^{\log_2 n} \frac{c_0}{2} n = \frac{c_0}{2} n \log_2 n.
\end{equation}
Because all terms are non-negative, we can exchange the order of summation. Each vector $k$ is summed over exactly its active scales $\mathcal{D}_k$:
\begin{equation}
\sum_{k=1}^m N_k^{1/3} L_k^{2/3} \sum_{d \in \mathcal{D}_k} \min\left(1, \left(\frac{w_d}{S_k}\right)^{1/3}\right) = \sum_{k=1}^m N_k^{1/3} L_k^{2/3} \Sigma_k \ge \frac{c_0}{2} n \log_2 n.
\end{equation}
Applying Theorem~\ref{thm:per_row_scale_multiplier} to bound the capacity multiplier $\Sigma_k$:
\begin{equation}
\sum_{k=1}^m N_k^{1/3} L_k^{2/3} (3 \log_2 \log_2 n + 13) \ge \frac{c_0}{2} n \log_2 n.
\end{equation}
Dividing both sides by $(3 \log_2 \log_2 n + 13)$, we obtain:
\begin{equation}
\sum_{k=1}^m N_k^{1/3} L_k^{2/3} \ge \frac{c_0 n \log_2 n}{6 \log_2 \log_2 n + 26}.
\end{equation}
For sufficiently large $n$, the denominator is positive and scales as $\Theta(\log \log n)$, yielding:
\begin{equation}
\sum_{k=1}^m N_k^{1/3} L_k^{2/3} \ge \Omega\left( n \frac{\log_2 n}{\log_2 \log_2 n} \right).
\end{equation}
This concludes the proof.
\end{proof}

\begin{restatedmaintheorem}[Main theorem]
There is an absolute constant $c>0$ such that, for all sufficiently large $n$,
all $m\in\mathbb{N}$, and all real factorizations
$Q=AB$ with $A\in\mathbb{R}^{n\times m}$ and
$B\in\mathbb{R}^{m\times n}$,
\[
\|A\|_{2\to\infty}\|B\|_{1\to1}
\ge
c\frac{\log^{3/2}n}{(\log\log n)^{3/2}}.
\]
Consequently,
\[
\gamma_{2,1}(Q)
=
\Omega\!\left(
\frac{\log^{3/2}n}{(\log\log n)^{3/2}}
\right).
\]
\end{restatedmaintheorem}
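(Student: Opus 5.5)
The plan is a case split on the size of $\Delta = M_A^2M_B^2$, which combines Theorem~\ref{thm:aggregate_fractional_bound} with the Fractional H\"older Reduction (Theorem~\ref{thm:fractional_holder}). First, reduce to $n$ a power of two. Restricting to the leading principal $n'\times n'$ block with $n'=2^{\lfloor\log_2 n\rfloor}\ge n/2$ gives a factorization $Q_{n'}=A'B'$, where $A'$ consists of the first $n'$ rows of $A$ and $B'$ of the first $n'$ columns of $B$. This cannot increase $M_A$ or $M_B$, and $\log n'=\Theta(\log n)$, so the bound for $n'$ transfers to $n$ at the cost of a constant.

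\textbf{Case 1: $\Delta>(\log_2 n)^3$.} Then directly $M_AM_B=\Delta^{1/2}>(\log_2 n)^{3/2}$. This already exceeds $c\log^{3/2}n/(\log\log n)^{3/2}$ for any $c\le 1$ and $n$ large, after adjusting for the base of the logarithm by absorbing $\ln 2$ factors into $c$.

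\textbf{Case 2: $\Delta\le(\log_2 n)^3$.} Here the hypothesis of Theorem~\ref{thm:aggregate_fractional_bound} holds, giving the explicit bound
\[
\sum_{k}N_k^{1/3}L_k^{2/3}\ge \frac{c_0\, n\log_2 n}{6\log_2\log_2 n+26}.
\]
Combining this with the upper bound $\sum_k N_k^{1/3}L_k^{2/3}\le n(M_AM_B)^{2/3}$ from Theorem~\ref{thm:fractional_holder} gives
\[
(M_AM_B)^{2/3}\ge \frac{c_0\log_2 n}{6\log_2\log_2 n+26}.
\]
For $n$ large, $6\log_2\log_2 n+26\le 7\log_2\log_2 n$. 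Raising to the power $3/2$ then yields $M_AM_B\ge (c_0/7)^{3/2}\log_2^{3/2}n/(\log_2\log_2 n)^{3/2}$. We take $c$ to be the minimum of the two case constants, after converting $\log_2$ to natural logarithms.

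The main thing to check is that the earlier results all refer to the same fixed constant $\epsilon=1/96$, and that $n$ is large enough for $\log_2\log_2 n>0$ and for the denominator absorption to be valid. Everything else is bookkeeping. The consequence for $\gamma_{2,1}(Q)$ follows by taking the infimum over all factorizations, since $c$ is uniform in $m$ and in the factorization.
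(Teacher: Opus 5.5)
Your proposal is correct and follows the paper's proof: the same case split on $\Delta=M_A^2M_B^2$ versus $(\log_2 n)^3$, with Case~2 closed by combining Theorem~\ref{thm:aggregate_fractional_bound} with the Fractional H\"older Reduction (Theorem~\ref{thm:fractional_holder}). Your explicit constants and power-of-two reduction only add bookkeeping the paper leaves implicit; the paper makes that reduction in Definition~\ref{def:haar_basis}.
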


\begin{proof}[Proof of Theorem~\ref{thm:main_result}]
Let $\Delta = M_A^2 M_B^2$. We proceed by case analysis on $\Delta$.

\noindent \textbf{Case 1:} $\Delta > (\log_2 n)^3$.
Then $M_A M_B = \Delta^{1/2} > (\log_2 n)^{1.5}$. Since $(\log_2 n)^{1.5}$ dominates $\Omega\left(\frac{\log^{1.5} n}{(\log \log n)^{1.5}}\right)$ asymptotically, the bound holds trivially.

\noindent \textbf{Case 2:} $\Delta \le (\log_2 n)^3$.
By Theorem~\ref{thm:aggregate_fractional_bound}, the fractional rank-one cost satisfies:
\begin{equation}
\sum_{k=1}^m N_k^{1/3} L_k^{2/3} \ge \Omega\left( n \frac{\log n}{\log \log n} \right).
\end{equation}
This exactly satisfies the Fractional Rank-One Bound (Equation~\ref{eq:fractional_cost_target}) established in Section~\ref{sec:fractional-reduction}. Therefore, by the Fractional H\"older Reduction (Theorem~\ref{thm:fractional_holder}), the factorization norm is bounded by:
\begin{equation}
M_A M_B \ge \Omega\left(\frac{\log^{1.5} n}{(\log \log n)^{1.5}}\right).
\end{equation}
Taking the infimum over all real factorizations gives the claimed bound for
$\gamma_{2,1}(Q)$.
\end{proof}

\bibliographystyle{alpha}
\bibliography{matrix_arxiv}

\end{document}